\title{The Pseudo-Dimension of Near-Optimal Auctions}
\author{ JAMIE MORGENSTERN\thanks{Carnegie Mellon University, {\tt
      jamiemmt@cs.cmu.edu}, supported in part by Simons Graduate
    Fellowship in Theoretical Computer Science and NSF grants
    CCF-1415460. Part of work done while a visiting student at
    Stanford University.} \and TIM ROUGHGARDEN \thanks{Stanford
    University{\tt tim@cs.stanford.edu}, Roughgarden was supported in
    part by NSF Award CCF-1215965 and an ONR PECASE Award.}}
\newcommand{\ind}{\tau}
\newcommand{\poly}{\textrm{poly}}
\renewcommand{\t}{t}
\newcommand{\lev}[2]{\ell_{#1, #2}}
\renewcommand{\v}{{\mathbf {v}}}
\newcommand{\vals}{{\mathbf {v}}}
\newcommand{\val}{{v}}
\newcommand{\winner}{{i^{*}}}
\newcommand{\mwinner}{{i^{'}}}
\newcommand{\winners}{{\mathcal{I}^{*}}}
\newcommand{\mwinners}{{\mathcal{I}^{'}}}
\newcommand{\sse}{{\subseteq}}
\newtheorem{theorem}                            {Theorem}[section]
\newtheorem{lemma}              [theorem]       {Lemma}
\newtheorem{corollary}          [theorem]       {Corollary}
\newtheorem{prop}               [theorem]       {Proposition}
\newtheorem{remark}                [theorem]
{Remark}}
\newtheorem{example}               [theorem]
{Example}}
\theoremstyle{break}
{\theorembodyfont{\rmfamily} }
\newcommand{\D}{\mathcal{D}}
\newcommand{\R}{\mathbb{R}}
\newcommand{\eps}{\epsilon}
\newcommand{\F}{F}
\newcommand{\Ct}{\mathcal{C}_t}
\newcommand{\C}{\mathcal{C}}
\newcommand{\G}{\mathcal{G}}
\newcommand{\dom}{\mathcal{Q}}
\newcommand{\bounded}{\frac{1}{\epsilon} + \log_{1 + \epsilon}H}
\newcommand{\VC}{\mathcal{VC}}
\newcommand{\err}{\textrm{err}}
\newcommand{\pd}{\textrm{d}}
\newcommand{\X}{\mathcal{X}}
\newcommand{\A}{\mathcal{A}}
\newcommand{\M}{\mathcal{M}}
\newcommand{\E}{\mathbb{E}}
\newcommand{\pr}{\mathbb{P}}
\newcommand{\vv}[2]{\phi_{#1}(#2)}
\newcommand{\vvi}[2]{\phi^{-1}_{#1}(#2)}
\newenvironment{proof}{\noindent {\em {Proof:}}}{$\blacksquare$\vskip \belowdisplayskip}
\newenvironment{prevproof}[2]{\noindent {\em {Proof of                          
{#1}~\ref{#2}:}}}{$\blacksquare$\vskip \belowdisplayskip}
\newcommand{\rev}{\textrm{Rev}}
\begin{document}

\maketitle

\begin{abstract}
  This paper develops a general approach, rooted in statistical
  learning theory, to learning an approximately revenue-maximizing
  auction from data.  We introduce {\em $t$-level auctions} to
  interpolate between simple auctions, such as welfare maximization
  with reserve prices, and optimal auctions, thereby balancing the
  competing demands of expressivity and simplicity.  We prove that
  such auctions have small representation error, in the sense that for
  every product distribution $F$ over bidders' valuations, there
  exists a $t$-level auction with small~$t$ and expected
  revenue close to optimal.  We show that the set of $t$-level
  auctions has modest pseudo-dimension (for polynomial~$t$) and
  therefore leads to small learning error.  One consequence of our
  results is that, in arbitrary single-parameter settings, one can
  learn a mechanism with expected revenue arbitrarily close to optimal
  from a polynomial number of samples.
\end{abstract}

\section{Introduction}\label{sec:intro}

In the traditional economic approach to identifying a
revenue-maximizing auction, one first posits a prior distribution over
all unknown information, and then solves for the auction that
maximizes expected revenue with respect to this distribution.  The
first obstacle to making this approach operational is the difficulty
of formulating an appropriate prior.  The second obstacle is that,
even if an appropriate prior distribution is available, the
corresponding optimal auction can be far too complex and unintuitive
for practical use.  This motivates the goal of identifying auctions
that are ``simple'' and yet nearly-optimal in terms of expected
revenue.

In this paper, we apply tools from learning theory to address both of
these challenges.  In our model, we assume that bidders' valuations
(i.e., ``willingness to pay'') are drawn from an unknown distribution
$F$.  A learning algorithm is given i.i.d.\ samples from $F$.  For
example, these could represent the outcomes of comparable transactions
that were observed in the past.  The learning algorithm suggests an
auction to use for future bidders, and its performance is measured by
comparing the expected revenue of its output auction to that earned by
the optimal auction for the distribution~$F$.

The possible outputs of the learning algorithm correspond to some set
$\C$ of auctions.  We view $\C$ as a design parameter that can be
selected by a seller, along with the learning algorithm.  A central
goal of this work is to identify classes $\C$ that balance
representation error (the amount of revenue sacrificed by restricting
to auctions in $\C$) with learning error (the generalization error
incurred by learning over $\C$ from samples).  That is, we seek a set
$\C$ that is rich enough to contain an auction that closely
approximates an optimal auction (whatever $F$ might be), yet simple
enough that the best auction in $\C$ can be learned from a small
amount of data.  Learning theory offers tools both for rigorously
defining the ``simplicity'' of a set $\C$ of auctions, through
well-known complexity measures such as the pseudo-dimension, and for
quantifying the amount of data necessary to identify the approximately
best auction from $\C$.  Our goal of learning a near-optimal auction
also requires understanding the representation error of different
classes $\C$; this task is problem-specific, and we develop the
necessary arguments in this paper.

\subsection{Our Contributions}

The primary contributions of this paper are the following.  First, we
show that well-known concepts from statistical learning theory can be
directly applied to reason about learning from data an approximately
revenue-maximizing auction.  Precisely, for a set $\C$ of auctions and
an arbitrary unknown distribution $F$ over valuations in $[1,H]$,
$O(\tfrac{H^2}{\eps^2}\pd_{\C} \log \tfrac{H}{\eps})$ samples from $F$
are enough to learn (up to a $1-\eps$ factor) the best auction in
$\C$, where $\pd_{\C}$ denotes the {\em pseudo-dimension} of the set
$\C$ (defined in Section~\ref{sec:learning}).
Second,
we introduce the class of {\em $t$-level auctions}, to
  interpolate smoothly between simple auctions, such as welfare
  maximization subject to individualized reserve prices (when $t=1$),
  and the complex auctions that can arise as optimal auctions
  (as $t \rightarrow \infty$).
Third, we prove that
in quite general auction settings with
  $n$ bidders, the pseudo-dimension of the set of $t$-level auctions is
  $O(nt \log nt)$.
Fourth,
we quantify the number~$t$ of levels required for the set of
  $t$-level auctions to have low representation error, with respect to
  the optimal auctions that arise from arbitrary product distributions
  $F$.  For example, for single-item auctions and several
  generalizations thereof, if 
  $t=\Omega(\tfrac{H}{\eps})$, then for every product distribution $F$
  there exists a $t$-level auction with expected revenue at least
  $1-\eps$ times that of the optimal auction for $F$.  

  In the above sense, the ``$t$'' in $t$-level auctions is a tunable
  ``sweet spot'', allowing a designer to balance the competing demands
  of expressivity (to achieve near-optimality) and simplicity (to
  achieve learnability).  For example, given a fixed amount of past
  data, our results indicate how much auction complexity (in the form
  of the number of levels $t$) one can employ without risking
  overfitting the auction to the data.

Alternatively, given a target approximation factor $1-\eps$, our results
give sufficient conditions on $t$ and consequently on the number of
samples needed to achieve this approximation factor.  The resulting
sample complexity upper bound has polynomial dependence on $H$,
$\eps^{-1}$, and the number $n$ of bidders.
Known results~\cite{AB,CR14} imply that any method of learning a
$(1-\eps)$-approximate auction from samples must have sample complexity with
polynomial dependence on all three of these parameters, even for
single-item auctions.

\subsection{Related Work}\label{sec:related}

The present work shares much of its spirit and
high-level goals with~\citet{balcan2008reducing}, who
proposed applying 
statistical learning theory to the design of
near-optimal auctions.  
The first-order
difference between the two works is that our work assumes bidders'
valuations are drawn from an unknown distribution, while
~\citet{balcan2008reducing} study the more demanding ``prior-free''
setting. Since no auction can achieve near-optimal revenue
ex-post,
\citet{balcan2008reducing} define their revenue
benchmark with respect to a set $\G$ of auctions on each input $\vals$
as the maximum revenue obtained by any auction of $\G$ on $\vals$.
The idea of learning from samples enters the work
of~\citet{balcan2008reducing} through the internal randomness of their
partitioning of bidders, rather than through an exogenous distribution
over inputs (as in this work). Both our work and theirs requires
polynomial dependence on $H, \tfrac{1}{\eps}$: ours in terms of a
necessary number of samples, and theirs in terms of a necessary number
of bidders; as well as a measure of the complexity of the class $\G$
(in our case, the pseudo-dimension, and in theirs, an analagous
measure).  The primary improvement of our work over of the results
in~\citet{balcan2008reducing} is that our results apply for single
item-auctions, matroid feasibility, and arbitrary single-parameter
settings (see Section~\ref{sec:prelim} for definitions); while their
results apply only to single-parameter settings 
of unlimited 
supply.\footnote{See~\citet{balcan2007CMUtechreport} for an extension
to the case of a large finite supply.} We also view as a feature the
fact that our sample complexity upper bounds can be deduced directly
from well-known results in learning theory --- we can focus instead on
the non-trivial and problem-specific work of bounding the
pseudo-dimension and representation error of well-chosen auction
classes.

\citet{elkind2007} also considers a similar model to ours, but only
for the special case of single-item auctions.  While her proposed
auction format is similar to ours, our results cover the far more
general case of arbitrary single-parameter settings and and non-finite
support distributions; our sample complexity bounds are also better
even in the case of a single-item auction (linear rather than
quadratic dependence on the number of bidders).  On the other hand,
the learning algorithm in \cite{elkind2007} (for single-item auctions)
is computationally efficient, while ours is not.

\citet{CR14} study single-item auctions with $n$ bidders with
valuations drawn from independent (not necessarily identical)
``regular'' distributions (see Section~\ref{sec:prelim}), and prove
upper and lower bounds (polynomial in $n$ and $\eps^{-1}$) on the
sample complexity of learning a $(1-\eps)$-approximate auction.  While
the formalism in their work is inspired by learning theory, no formal
connections are offered; in particular, both their upper and lower
bounds were proved from scratch. Our positive results include
single-item auctions as a very special case and, for bounded or MHR
valuations, our sample complexity upper bounds are much better than
those in~\citet{CR14}.

\citet{huang2014making} consider learning the optimal price from
samples when there is a single buyer and a single seller; this problem
was also studied implicitly in~\cite{dhangwatnotai2010revenue}.  Our
general positive results obviously cover the bounded-valuation and MHR
settings in \cite{huang2014making}, though the specialized analysis in
\cite{huang2014making} yields better (indeed, almost optimal) sample
complexity bounds, as a function of $\eps^{-1}$ and/or $H$.

\citet{medina2014learning} show how to use a combination of the
pseudo-dimension and Rademacher complexity to measure the sample
complexity of selecting a single reserve price for the  VCG mechanism
to optimize revenue.  In our notation, this corresponds to analyzing a
single set~$\C$ of auctions (VCG with a reserve).
\citet{medina2014learning} do not address the expressivity
vs.\ simplicity trade-off that is central to this paper.

\citet{dughmi2014sampling} also study the sample complexity of learning good
auctions, but their main results are negative (exponential sample
complexity), for the difficult scenario of multi-parameter settings.
(All settings in this paper are single-parameter.)

Our work on $t$-level auctions also contributes to the literature on
simple approximately revenue-maximizing auctions (e.g.,
\cite{chawla2007algorithmic,hartline2009simple,
  chawla2010multi,devanur2011prior,
  RTY12,yao2015soda,moshematt2014focs}).  Here, one takes the
perspective of a seller who knows the valuation distribution $\F$ but
is bound by a ``simplicity constraint'' on the auction deployed,
thereby ruling out the optimal auction.  Our results that bound the
representation error of $t$-level auctions
(Theorems~\ref{thm:rep-bounded},~\ref{thm:rep-mhr},~\ref{thm:rep-matroid},
and~\ref{thm:rep-param}) can be interpreted as a principled way to
trade off the simplicity of an auction with its approximation
guarantee.  While previous work in this literature generally left the
term ``simple'' safely undefined, this paper effectively proposes the
pseudo-dimension of an auction class as a rigorous and quantifiable
simplicity measure.

\section{Preliminaries}\label{sec:prelim}

This section reviews useful terminology and notation standard in
Bayesian auction design and learning theory.

\paragraph{Bayesian Auction Design}\label{sec:bayesian}

We consider \emph{single-parameter settings} with $n$ bidders.
This means that each bidder has a single unknown parameter, its {\em
  valuation} or willingness to pay for ``winning.''
(Every bidder has value~0 for losing.)
A setting is specified by a
collection $\mathcal{X}$ of subsets of $\{1,2,\ldots,n\}$; each such
subset represent a collection of bidders that can simultaneously
``win.''  For example, in a setting with $k$ copies of an item, where
no bidder wants more than one copy, $\mathcal{X}$ would be all subsets
of $\{1,2,\ldots,n\}$ of cardinality at most~$k$.

A generalization of this case, studied in Section~\ref{sec:matroid},
is {\em matroid} settings.  These satisfy: (i) whenever $X \in \X$ and
$Y \subseteq X$, $Y \in \X$; and (ii) for two sets $|I_1| < |I_2|$,
$I_1, I_2 \in \X$, there is always an augmenting element
$i_2 \in I_2 \setminus I_1$ such that $I_1 \cup \{i_2\}\in \X$, $\X$.
Section~\ref{sec:single-param} also considers arbitrary
single-parameter settings, where the only assumption is that
$\emptyset \in \X$.  To ease comprehension, we often illustrate our
main ideas using single-item auctions (where $\X$ is the singletons
and the empty set).

We assume bidders' valuations are drawn from the continuous joint
cumulative distribution $F$.  Except in the extension in
Section~\ref{sec:mhr}, we assume that the support of $F$ is limited to
$[1,H]^n$.  As in most of optimal auction
theory~\cite{myerson1981optimal}, we usually assume that $F$ is a
product distribution, with $F= F_1\times F_2 \times \ldots \times F_n$
and each $v_i \sim F_i$ drawn independently but not identically.
The \emph{virtual value} of bidder $i$ is denoted by
$\phi_i(v_i) = v_i - \frac{1-F_i(v_i)}{f_i(v_i)}$.  A distribution
satisfies the \emph{monotone-hazard rate (MHR)} condition if
$f_i(v_i)/(1-F_i(v_i))$ is nondecreasing; intuitively, if its tails
are no heavier than those of an exponential distribution.  In a
fundamental paper, \cite{myerson1981optimal} proved that when every
virtual valuation function is nondecreasing (the ``regular'' case),
the auction that maximizes expected revenue for $n$ Bayesian bidders
chooses winners in a way which maximizes the sum of the virtual values
of the winners. This auction is known as Myerson's auction, which we
refer to as $\M$. The result can be extended to the general,
``non-regular'' case by replacing the virtual valuation functions by
``ironed virtual valuation functions.''  The details are
well-understood but technical; see~\citet{myerson1981optimal} and
\citet{hartline2015} for details.

\paragraph{ Sample Complexity, VC Dimension, and the
  Pseudo-Dimension}\label{sec:learning}

This section reviews several well-known definitions from learning
theory. Suppose there is some domain $\dom$, and let $c$ be some
unknown target function $c: \dom \to \{0,1\}$.
Let $\D$ be an unknown distribution over $\dom$. We wish to understand
how many labeled 
samples $(x, c(x))$, $x\sim \D$, are necessary and sufficient to be
able to output a $\hat c$ which agrees with $c$ almost everywhere with
respect to $\D$. The distribution-independent sample complexity of
learning $c$ depends fundamentally on the ``complexity'' of the set of
binary functions $\C$ from which we are choosing $\hat c$.  We define
the relevant complexity measure next.

Let $S$ be a set of $m$ samples from $\dom$. The set $S$ is said to be
\emph {shattered} by $\C$ if, for every subset $T\subseteq S$, there is
some $c_T\in\C$ such that $c_T(x) = 1$ if $x\in T$ and $c_T(y) = 0$ if
$y\notin T$.  That is, ranging over all $c \in \C$ induces all
$2^{|S|}$ possible projections onto $S$.
The {\em VC dimension} of $\C$, denoted $\VC(\C)$, is the size
of the largest set $S$ that can be shattered by $\C$. 

Let $\err_S(\hat c) = (\sum_{x\in S} |c(x) - \hat{c}(x)|)/|S|$ denote
the empirical error of $\hat c$ on $S$, and let
$\err(\hat c) = \E_{x\sim D}[|c(x) - \hat{c}(x)|]$ denote the
true expected error of $\hat c$ with respect to $\D$.  A key
result from learning theory~\cite{VC} is: for every distribution $\D$,
a sample $S$ of size
$\Omega (\eps^{-2}(\VC(\C) + \ln\frac{1}{\delta}))$ is sufficient to
guarantee that
$\err_{S}(\hat c) \in [\err(\hat c) - \epsilon, \err(\hat c) +
\epsilon]$
for \emph{every} $\hat c\in \C$ with probability $1-\delta$.  In this
case, the error on the sample is close to the true error,
simultaneously for every hypothesis in $\C$.  In particular, choosing
the hypothesis with the minimum sample error minimizes the true error,
up to $2\eps$. We say $\C$ is {\em $(\epsilon, \delta)$-uniformly
  learnable with sample complexity $m$} if, given a sample $S$ of size
$m$, with probability $1-\delta$, for all $c\in \C$,
$|\err_S(c) - \err(c)| < \epsilon$: thus, any class $\C$ is
$(\eps, \delta)$-uniformly learnable with
$m = \Theta\left(\frac{1}{\eps^2}\left(\VC(\C) + \ln
    \frac{1}{\delta}\right)\right)$
samples.  Conversely, for every learning algorithm $\A$ that uses
fewer than $\frac{\VC(\C)}{\epsilon}$ samples, there exists a
distribution $\D'$ and a constant $q$ such that, with probability at
least $q$, $\A$ outputs a hypothesis $\hat c'\in \C$ with
$\err(\hat c') > \err(\hat c) + \frac{\epsilon}{2}$ for some
$\hat c \in \C$.  That is, the true error of the output hypothesis is
more than $\frac{\epsilon}{2}$ larger the best hypothesis in the
class.

To learn real-valued functions, we need a generalization of VC
dimension (which concerns binary functions). The
\emph{pseudo-dimension} \cite{pollard1984} does exactly
this. Formally, let $c : \dom \to [0,H]$ be a real-valued function
over $\dom$, and $\C$ the class we are learning over. Let $S$ be a
sample drawn from $\D$, $|S|=m$, labeled according to $c$.  Both the
empirical and true error of a hypothesis $\hat c$ are defined as
before, though $|\hat c(x) - c(x)|$ can now take on values in
$[0, H]$ rather than in $\{0,1\}$. Let
$(r^1, \ldots, r^m) \in [0,H]^m$ be a set of \emph{targets} for
$S$. We say $(r^1, \ldots, r^m) $ \emph{witnesses} the shattering of
$S$ by $\C$ if, for each $T\subseteq S$, there exists some $c_T\in \C$
such that $f_T(x^i) \geq r^i$ for all $x^i \in T$ and $c_T(x^i) < r^i$
for all $x^i \notin T$. If there exists some $\vec r$ witnessing the
shattering of $S$, we say $S$ is {\em shatterable} by $\C$.  The {\em
  pseudo-dimension} of $\C$, denoted $\pd_\C$, is the size of the
largest set $S$ which is shatterable by $\C$.  The sample complexity
upper bounds of this paper are derived from the following theorem,
which states that the distribution-independent sample complexity of
learning over a class of real-valued functions $\C$ is governed by the
class's pseudo-dimension.

\begin{theorem}\label{thm:fat-sample}[E.g.~\cite{AB}]
  Suppose $\C$ is a class of real-valued functions with range in $[0,H]$ and
  pseudo-dimension $\pd_\C$. For every $\epsilon > 0, \delta \in [0,1]$, the
  sample complexity of $(\epsilon, \delta)$-uniformly learning $f$
  with respect to $\C$ is $ m = O\left(
    \left(\frac{H}{\epsilon}\right)^2\left(\pd_\C\ln\left(\frac{H}{\epsilon}\right)
      + \ln\left(\frac{1}{\delta}\right)\right)\right).$
\end{theorem}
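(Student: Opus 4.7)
The plan is to prove this standard uniform-convergence theorem by adapting the classical Vapnik--Chervonenkis argument from $\{0,1\}$-valued to bounded real-valued classes, with the pseudo-dimension playing the role of VC dimension. Without loss of generality I would first rescale, dividing all function values and the target by $H$ so that everything lives in $[0,1]$ and the desired deviation becomes $\epsilon/H$; then the loss family $\{\,\ell_c(x)=|c(x)-c^*(x)|\,:\,c\in\C\}$ inherits pseudo-dimension $O(\pd_\C)$ from $\C$ (since thresholds on $|\cdot|$ correspond to at most two thresholds on $\C$).

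The next step is the symmetrization (``ghost sample'') trick: draw an independent second sample $S'$ of size $m$ and show
\[
\Pr\Bigl[\sup_{c\in\C}|\err_S(c)-\err(c)|>\epsilon/H\Bigr]\le 2\Pr\Bigl[\sup_{c\in\C}|\err_S(c)-\err_{S'}(c)|>\epsilon/(2H)\Bigr],
\]
valid once $m$ is large enough that Chebyshev controls $|\err_{S'}(c)-\err(c)|$ for each fixed $c$. Conditioning on the multiset $S\cup S'$ then reduces the right-hand side to a probability over a uniformly random partition into two halves, so the supremum is effectively over the behavior of $\C$ on $2m$ fixed points.

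At this point I would discretize. The pseudo-dimension analogue of Sauer's lemma (Pollard / Haussler) says that the $\alpha$-covering number of $\C$ in the empirical $L_\infty$ metric on $2m$ points is at most $\bigl(c_1/\alpha\bigr)^{\pd_\C}$ for some absolute constant $c_1$, once we rescale to $[0,1]$. Pick $\alpha=\epsilon/(8H)$, replace each $c\in\C$ by its nearest cover element, apply Hoeffding's inequality to each of the (at most $(O(H/\epsilon))^{\pd_\C}$) cover elements on the random partition, and union bound. This yields a failure probability of at most
\[
2\Bigl(\tfrac{c_2 H}{\epsilon}\Bigr)^{\pd_\C}\exp\bigl(-c_3\,m\epsilon^2/H^2\bigr),
\]
and setting this $\le\delta$ and solving for $m$ gives exactly the bound in the statement.

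The main obstacle is the covering-number bound for real-valued classes of pseudo-dimension $d$; the elementary Sauer--Shelah counting is no longer enough, and one needs Pollard's probabilistic extraction (or Haussler's packing argument) to convert ``$d$ shattered real targets'' into ``at most $(c/\alpha)^d$ distinct $\alpha$-behaviors.'' Everything else — rescaling, symmetrization, Hoeffding, union bound, and solving the resulting inequality for $m$ — is routine bookkeeping once that covering bound is in hand. Since the theorem is quoted from~\cite{AB}, I would in fact cite that reference for the covering lemma and spend the writeup only on the symmetrization/union-bound portion.
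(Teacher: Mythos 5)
The paper does not prove this theorem; it is quoted directly from~\cite{AB} as a standard result, and no argument appears in the text. Your outline is a correct reconstruction of the classical proof one finds in that reference: rescale to $[0,1]$, symmetrize with a ghost sample, discretize via a covering argument, apply Hoeffding plus a union bound, and solve for $m$.

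One technical slip worth flagging: the empirical $L_\infty$ covering number of a class of pseudo-dimension $d$ on a finite set of $2m$ points is \emph{not} bounded by $(c_1/\alpha)^{d}$ independently of $m$. The standard bound (Anthony \& Bartlett, Theorem~18.4, which is the pseudo-dimension analogue of Sauer's lemma) is on the order of $\left(\frac{em}{d\alpha}\right)^{d}$, which carries an $m$-dependence. You can either use that bound directly and observe that the extra factor of $m$ inside the logarithm is harmless once you solve the recurrence for $m$ (it only affects constants inside the big-$O$), or you can switch to an empirical $L_1$ (or $L_2$) cover, where Haussler's packing bound does give the dimension-free $(c/\alpha)^{d}$ you quoted, and adjust the Hoeffding/union-bound step to work with an $L_1$-approximate representative. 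Either route yields the stated $m = O\!\left(\left(\tfrac{H}{\epsilon}\right)^2\left(\pd_\C\ln\tfrac{H}{\epsilon} + \ln\tfrac{1}{\delta}\right)\right)$. The rest of the sketch --- the rescaling, the reduction of the loss family's pseudo-dimension, the symmetrization inequality, and the final bookkeeping --- is sound.
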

Moreover, the guarantee in Theorem~\ref{thm:fat-sample}
is realized by the learning algorithm that simply outputs the
function $c \in \C$ with the smallest empirical error on the sample.

\paragraph{Applying Pseudo-Dimension to Auction
  Classes}\label{sec:applying}

For the remainder of this paper, we consider classes of truthful
auctions $\C$.\footnote{An auction is {\em truthful} if truthful
  bidding is a dominant strategy for every bidder.  That is: for every
  bidder~$i$, and all possible bids by the other bidders, $i$
  maximizes its expected utility (value minus price paid) by bidding
  its true value.  In the single-parameter settings that we study,
the expected revenue of the optimal non-truthful auction (measured at
a Bayes-Nash equilibrium with respect to the prior distribution) is no
larger than that of the optimal truthful auction.}
When we discuss some auction $c\in \C$, we treat
$c : [0,H]^n \to \mathbb{R}$ as the function that maps (truthful) bid
tuples to the revenue achieved on them by the auction $c$.  Then,
rather than minimizing error, we aim to maximize revenue.  In our
setting, the guarantee of Theorem~\ref{thm:fat-sample} directly
implies that, with probability at least $1-\delta$ (over the $m$
samples), the output of the {\em empirical revenue maximization}
learning algorithm --- which returns the auction $c \in \C$ with the
highest average revenue on the samples --- chooses an auction with
expected revenue (over the true underlying distribution $F$) that is
within an additive $\eps$ of the maximum possible. 



\section{Single-Item Auctions}\label{s:si}

To illustrate out ideas, we
first focus on single-item auctions.
The results of this section are generalized significantly in
Sections~\ref{sec:matroid} and~\ref{sec:single-param}.
Section~\ref{sec:t-level} defines the class of $t$-level
single-item auctions, gives an example, and interprets the auctions as
approximations to virtual welfare
maximizers. Section~\ref{sec:fat-shattering} proves that the
pseudo-dimension of the set of such auctions is $O(nt \log nt)$, which
by Theorem~\ref{thm:fat-sample} implies a sample-complexity upper
bound.  Section~\ref{sec:representation} proves that taking
$t = \Omega(\tfrac{H}{\eps})$ yields low representation error.

\subsection{$t$-Level Auctions: The Single-Item Case}\label{sec:t-level}

We now introduce {\em $t$-level auctions}, or $\Ct$ for
short. Intuitively, one can think of each bidder as facing one of $t$
possible prices; the price they face depends upon the values of the
other bidders. Consider, for each bidder $i$, $t$ numbers
$0\leq \lev{i}{0} \leq\lev{i}{1}\leq\ldots \leq \lev{i}{t-1}$. We
refer to these $t$ numbers as \emph{thresholds}. This set of $tn$
numbers defines a $t$-level auction with the following allocation
rule. Consider a valuation tuple $\v$:
\begin{enumerate}[noitemsep,nolistsep]

\item For each bidder~$i$, let $t_i(v_i)$ denote the 
index $\ind$ of the largest threshold
$\lev{i}{\ind}$ that lower bounds $v_i$ (or -1 if $v_i <
  \lev{i}{0}$). 
We call $t_i(v_i)$ the {\em level} of bidder~$i$.
\item Sort the bidders from highest level to lowest level and, within
  a level, use a fixed lexicographical tie-breaking ordering $\succ$
  to pick the winner.\footnote{Our results hold also for some other
    tie-breaking rules.}
\item Award the item to the first bidder in this sorted order (unless $t_i
  = -1$ for every bidder~$i$, in which case there is no sale).
\end{enumerate}

The payment rule is the unique one that renders truthful bidding a
dominant strategy and charges~0 to losing bidders --- that is, the winning
bidder pays the lowest bid at which she would continue to win.  It is
important for us to understand this payment rule in detail; there are
three interesting cases.  Suppose bidder $i$ is the winner.  In the
first case, $i$ is the only bidder who might be allocated the item
(other bidders have level -1), in which case her bid must be at least
her lowest threshold.  In the second case, there are multiple bidders
at her level, so she must bid high enough to be at her level (and,
since ties are broken lexicographically, this is her threshold to
win).  In the final case, she need not compete at her level: she can
choose to either pay one level above her competition (in which case
her position in the tie-breaking ordering does not matter) or she can
bid at the same level as her highest-level competitors (in which case
she only wins if she dominates all of those bidders at the
next-highest level according to $\succ$).  Formally, the payment $p$
of the winner $i$ (if any) is as follows.  Let $\bar\ind$ denote the
highest level $\ind$ such that there at least two bidders at or above
level $\ind$, and $I$ be the set of bidders other than~$i$ whose level
is at least $\bar\ind$.

\begin{itemize}[noitemsep,nolistsep]
\item[Monop] If $\bar\ind = -1$, then $p_i = \lev{i}{0}$ (she is the
  only potential winner, but must have level $\geq 0$ to win).
\item[Mult] If $t_i(v_i) = \bar\ind$ then $p_i = \lev{i}{\bar\ind}$
  (she needs to be at level $\bar\ind$).
\item[Unique] If $t_i(v_i) > \bar \ind$, if $i \succ i'$ for all
  $i'\in I$, she pays $p_i = \lev{i}{\bar\ind}$, otherwise she pays
  $p_i = \lev{i}{\bar \ind + 1}$ (she either needs to be at level
  $\bar \ind + 1$, in which case her position in $\succ$ does not
  matter, or at level $\bar\ind$, in which case she would need to be
  the highest according to $\succ$).
\end{itemize}

We now describe a particular $t$-level auction, and demonstrate each
case of the payment rule.
\begin{example}\label{ex:levels-payment}
  Consider the following $4$-level auction for bidders $a,b,c$. Let
  $\lev{a}{\cdot}= [2,4,6,8]$, $\lev{b}{\cdot} = [1.5, 5, 9, 10]$, and
  $\lev{c}{\cdot} = [1.7, 3.9, 6, 7]$.  For example, if bidder $a$
  bids less than $2$ she is at level $-1$, a bid in $[2,4)$ puts her
  at level $0$, a bid in $[4,6)$ at level $1$, a bid in $[6, 8)$ at
  level $2$, and a bid of at least $8$ at level $3$.  Let
  $a \succ b \succ c$.
\begin{itemize}[noitemsep,nolistsep]
\item[Monop] If $v_a = 3, v_b < 1.5, v_c < 1.7$, then $b, c$ are at level $-1$ (to
  which the item is never allocated). So, $a$ wins and pays $2$, the
  minimum she needs to bid to be at level $0$.
\item[Mult] If $v_a \geq 8, v_b \geq 10, v_c < 7$, then $a$ and $b$
  are both at level $3$, and $a \succ b$, so $a$ will win and pays $8$
  (the minimum she needs to bid to be at level $3$).
\item[Unique] If $v_a \geq 8, v_b \in [5,9], v_c \in [3.9,6]$, then
  $a$ is at level $3$, and $b$ and $c$ are at level $1$. Since
  $a\succ b$ and $a\succ c$, $a$ need only pay $4$ (enough to be at
  level $1$).  If, on the other hand, $v_a \in [4,6], v_b = [5,9]$ and
  $v_c \geq 6$, $c$ has level at least $2$ (while $a,b$ have level
  $1$), but $c$ needs to pay $6$ since $a, b \succ c$.
\end{itemize}
\end{example}

\begin{remark}[Connection to virtual valuation functions]
  $t$-level auctions are naturally interpreted as discrete
  approximations to virtual welfare maximizers, and our representation
  error bound in Theorem~\ref{thm:rep-bounded} makes this
  precise. Each level corresponds to a constraint of the form ``If any
  bidder has level at least $\ind$, do not sell to any bidder with
  level less than $\ind$.''  We can interpret the $\lev{i}{\ind}$'s
  (with fixed $\ind$, ranging over bidders~$i$) as the bidder values
  that map to some common virtual value.  For example, $1$-level
  auctions treat all values below the single threshold as having
  negative virtual value, and above the threshold uses values as
  proxies for virtual values.  $2$-level auctions use the second
  threshold to the refine virtual value estimates, and so on.  With
  this interpretation, it is intuitively clear that as $t\to\infty$,
  it is possible to estimate bidders' virtual valuation functions and
  thus approximate the optimal auction to arbitrary accuracy.
\end{remark}
\subsection{The Pseudo-Dimension of $t$-Level
  Auctions}\label{sec:fat-shattering}

This section shows that the pseudo-dimension of the class of
$t$-level single-item auctions with $n$ bidders is
$O(nt \log nt)$.  Combining this with Theorem~\ref{thm:fat-sample}
immediately yields sample complexity bounds (parameterized by $t$) for
learning the best such auction from samples.

\begin{theorem}\label{thm:fat-single}
For a fixed tie-breaking order $\succ$,
the pseudo-dimension of the set of $n$-bidder single-item $t$-level
  auctions is $O\left(nt \log(nt)\right)$.
\end{theorem}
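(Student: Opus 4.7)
The plan is to bound the shattering growth of $\Ct$ on an arbitrary labeled sample. If $\Ct$ shatters a sample $S = \{\v^1,\ldots,\v^m\} \subseteq [0,H]^n$ witnessed by targets $r^1,\ldots,r^m$, then as $c$ ranges over $\Ct$ the binary vector $(\mathbf{1}[c(\v^j) \geq r^j])_{j=1}^m$ must realize all $2^m$ subsets of $\{1,\ldots,m\}$. I will show that the number of realizable sign patterns is at most $(nt+1)(2em)^{nt}$, so any shatterable $m$ must satisfy $2^m \leq (nt+1)(2em)^{nt}$, which solves to $m = O(nt \log(nt))$.

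I parameterize a $t$-level auction $c \in \Ct$ (with the tie-breaking $\succ$ and the number of levels $t$ both fixed) by its $nt$ thresholds $(\lev{i}{\ind})_{i \in \{1,\ldots,n\},\, \ind \in \{0,\ldots,t-1\}} \in \R^{nt}$, subject to the monotonicity $\lev{i}{0} \leq \cdots \leq \lev{i}{t-1}$ within each $i$. Fixing $S$ and the targets, I introduce in $\R^{nt}$ the arrangement $\mathcal{A}$ of $2ntm$ axis-aligned hyperplanes, consisting of the \emph{level} hyperplanes $\{\lev{i}{\ind} = v_i^j\}$ and the \emph{comparison} hyperplanes $\{\lev{i}{\ind} = r^j\}$, where $(i,\ind,j)$ ranges over $\{1,\ldots,n\} \times \{0,\ldots,t-1\} \times \{1,\ldots,m\}$.

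The heart of the argument is the claim that within each open cell of $\mathcal{A}$ the sign pattern $(\mathbf{1}[c(\v^j) \geq r^j])_{j=1}^m$ is constant. The level hyperplanes fix, cell-wise, the level $t_i(v_i^j) \in \{-1,0,\ldots,t-1\}$ of every bidder on every sample. Combined with the fixed tie-breaking order $\succ$, this determines the winner on each sample and, by case analysis of the Monop/Mult/Unique payment rule, a specific index $(i(j),\ind(j))$ such that $c(\v^j) = \lev{i(j)}{\ind(j)}$ (or $c(\v^j) = 0$ in the no-sale case, which yields a sign that does not depend on the thresholds at all). The comparison hyperplanes then resolve the sign of $\lev{i(j)}{\ind(j)} - r^j$, so the full sign pattern depends only on the cell.

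Finally, an arrangement of $k$ hyperplanes in $\R^d$ has at most $\sum_{i=0}^d \binom{k}{i} \leq (d+1)(ek/d)^d$ full-dimensional cells; plugging in $k = 2ntm$ and $d = nt$ yields at most $(nt+1)(2em)^{nt}$ cells, hence at most that many sign patterns. Imposing $2^m \leq (nt+1)(2em)^{nt}$ and solving gives $m = O(nt \log(nt))$. I expect the main obstacle to be the cell-constancy claim in the Unique case, where the payment is either $\lev{i}{\bar\ind}$ or $\lev{i}{\bar\ind+1}$: one must verify that this choice is governed only by the identities of the bidders occupying the top two populated levels and by $\succ$, both of which are determined by the level pattern alone, not by any further numerical comparison with the thresholds.
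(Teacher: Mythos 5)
Your argument is correct, and it reaches the same $O(nt\log nt)$ bound by a genuinely different route from the paper. The paper partitions the set of $t$-level auctions into \emph{equivalence classes} determined by the total order of all $nt$ thresholds and their interleaving with all $nm$ sampled values (including cross-bidder comparisons, which the allocation rule never actually uses); it bounds the number of classes by $(nt)!\binom{nm+nt}{nt}$, and then, because the classes do not resolve comparisons with the targets $r^j$, it needs a second argument: within a class the winner and the index of the price-setting threshold are fixed on each sample, so varying each $\ell_{i,\tau}$ sweeps out a nested chain of at most $|S_{i,\tau}|$ labelings of the samples where that threshold is charged, giving at most $m^{nt}$ labelings per class. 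You instead work directly in the $nt$-dimensional parameter space and throw the target comparisons $\{\ell_{i,\tau}=r^j\}$ into the same arrangement as the level comparisons $\{\ell_{i,\tau}=v_i^j\}$, so each cell carries a single labeling and the second step collapses. This is cleaner and also tighter: you only introduce same-bidder hyperplanes (the allocation, winner, and price-setting index genuinely depend only on each bidder's own level and the fixed $\succ$, as you verify for the Unique case), and since all your hyperplanes are axis-aligned --- at most $2m$ per coordinate $\ell_{i,\tau}$ --- you could in fact replace the general $\sum_{i\le d}\binom{k}{i}$ bound by the elementary grid count $(2m+1)^{nt}$, which is both smaller and avoids invoking the hyperplane-arrangement formula. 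One small gloss shared with the paper: both proofs count open cells/strict orderings and silently ignore parameter vectors lying on a hyperplane (equivalently, ties $\ell_{i,\tau}=v_i^j$ or $\ell_{i,\tau}=r^j$); this is harmless because counting faces of all dimensions only changes the bound by a factor polynomial in $k$, leaving $m=O(nt\log nt)$ unchanged.
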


\begin{proof}
  Recall from Section~\ref{sec:learning} that we need to upper bound
  the size of every set that is shatterable using $t$-level auctions.
  Fix a set of samples
  $S = \left(\mathbf{v^1}, \ldots, \mathbf{v^m}\right)$ of size $m$
  and a potential witness $R = \left(r^1, \ldots, r^m\right)$. Each
  auction $c$ induces a binary labeling of the samples $\vals^j$ of
  $S$ (whether $c$'s revenue on $\vals^j$ is at least $r^j$ or
  strictly less than $r^j$).  The set $S$ is shattered with witness
  $R$ if and only if the number of distinct labelings of $S$ given by
  any $t$-level auction is $2^m$.

  We upper-bound the number of distinct labelings of $S$ given by
  $t$-level auctions (for some fixed potential witness~$R$), counting
  the labelings in two stages.  Note that $S$ involves $nm$ numbers
  --- one value $\val_i^j$ for each bidder for each sample.  A
  $t$-level auction involves $nt$ numbers --- $t$ thresholds
  $\ell_{i,\tau}$ for each bidder.  Call two $t$-level auctions with
  thresholds $\{ \ell_{i,\tau} \}$ and $\{ \hat \ell_{i,\tau} \}$ {\em
    equivalent} if:
\begin{enumerate}[noitemsep,nolistsep]

\item The relative order of the $\ell_{i,\tau}$'s agrees with that of
  the $\hat \ell_{i,\tau}$'s, in that both induce the same permutation
  of $\{1,2,\ldots,n\} \times \{0,1,\ldots,t-1\}$.

\item Merging the sorted list of the $\val_{i}^j$'s with the sorted
  list of the $\ell_{i,\tau}$'s yields the same partition of the 
$\val_i^j$'s as does merging it with the sorted list of the $\hat
  \ell_{i,\tau}$'s.

\end{enumerate}

Note that this is an equivalence relation.  If two $\t$-level auctions
are equivalent, every comparison between two numbers (valuations or
thresholds) is resolved identically by those auctions. Using the
defining properties of equivalence, a crude upper bound on the number
of equivalence classes is
\begin{align}\label{eq:classes}
(nt)! \cdot \binom{nm+nt}{nt} \le (nm+nt)^{2nt}.
\end{align}
We now upper-bound the number of distinct labelings of $S$ that can be
generated by any auction in a single equivalence class $C$.  First, as
all comparisons between two numbers (valuations or thresholds) are
resolved identically for all auctions in $C$, each bidder $i$ in each
sample $\vals^j$ of $S$ is assigned the same level (across auctions in
$C$), and the winner (if any) in each sample $\vals^j$ is constant
across all of $C$.  By the same reasoning, the identity of the
parameter that gives the winner's payment (some $\ell_{i,\tau}$) is
uniquely determined by pairwise comparisons (recall
Section~\ref{sec:t-level}) and hence is common across all auctions in
$C$. The payments $\ell_{i,\tau}$, however, can vary across auctions
in the equivalence class.

For a bidder $i$ and level $\tau \in \{0,1,2,\ldots,t-1\}$, let
$S_{i,\tau} \sse S$ be the subset of samples in which bidder~$i$ wins
and pays $\ell_{i,\tau}$.  The revenue obtained by each auction in $C$
on a sample of $S_{i,\tau}$ is simply $\ell_{i,\tau}$ (and independent
of all other parameters of the auction).  Thus, ranging over all
$t$-level auctions in $C$ generates at most $|S_{i,\tau}|$ distinct
binary labelings of $S_{i,\tau}$ --- the possible subsets of
$S_{i,\tau}$ for which an auction meets the corresponding target $r^j$
form a nested collection.

Summarizing, within the equivalence class $C$ of $t$-level auctions,
varying a parameter $\ell_{i,\tau}$ generates at most $|S_{i,\tau}|$
different labelings of the samples $S_{i,\tau}$ and has no effect on
the other samples.  Since the subsets $\{ S_{i,\tau} \}_{i,\tau}$ are
disjoint, varying all of the $\ell_{i,\tau}$'s (i.e., ranging over
$C$) generates at most
\begin{align}\label{eq:labelings}
\prod_{i=1}^n \prod_{\tau=0}^{t-1} |S_{i,\tau}| \le m^{nt}
\end{align}
distinct labelings of $S$.

Combining~\eqref{eq:classes} and~\eqref{eq:labelings}, the class of
all $t$-level auctions produces at most $(nm+nt)^{3nt}$ distinct
labelings of $S$.  Since shattering $S$ requires $2^m$ distinct
labelings, we conclude that $2^m \le (nm+nt)^{3nt},$ implying
$m = O(nt \log nt)$ as claimed.
\end{proof}

\subsection{The Representation Error of Single-Item $\t $-Level 
  Auctions}\label{sec:representation}

In this section, we show that for every bounded product distribution,
there exists a $\t $-level auction with expected revenue close to that
of the optimal single-item auction when bidders are independent and
bounded. The analysis ``rounds'' an optimal auction to a $t$-level
auction without losing much expected revenue.  This is done using
thresholds to approximate each bidder's virtual value: the lowest
threshold at the bidder's monopoly reserve price, the next
$\frac{1}{\epsilon}$ thresholds at the values at which bidder $i$'s
virtual value surpasses multiples of $\epsilon$, and the remaining
thresholds at those values where bidder $i$'s virtual value reaches
powers of $1+\epsilon$.  Theorem~\ref{thm:rep-bounded} formalizes this
intuition.

\begin{theorem}\label{thm:rep-bounded}
  Suppose $F$ is product distribution over~$[1,H]^n$.  If
  $\t = \Omega\left(\frac{1}{\epsilon} + \log_{1+\epsilon} H\right)$,
  then $\Ct$ contains a single-item auction with expected revenue at least
  $1-\epsilon$ times the optimal expected revenue.
\end{theorem}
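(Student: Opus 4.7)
The plan is to construct an explicit $t$-level auction whose thresholds discretize Myerson's virtual-value structure on a common grid across bidders, then bound the revenue loss by comparing expected virtual welfare. For each bidder~$i$, let $\phi_i$ denote the (ironed) virtual valuation function, which is non-decreasing. Following the paragraph preceding the theorem, set $\lev{i}{0}$ to be $i$'s monopoly reserve (the smallest $v$ with $\phi_i(v)\ge 0$), choose the next $\lceil 1/\epsilon\rceil$ thresholds so that $\phi_i(\lev{i}{\tau})$ reaches the additive grid $\epsilon,2\epsilon,\ldots,1$, and choose the remaining $\lceil\log_{1+\epsilon} H\rceil$ thresholds so that $\phi_i(\lev{i}{\tau})$ reaches the multiplicative grid $(1+\epsilon),(1+\epsilon)^2,\ldots,H$. (If a target is unreached by $\phi_i$ on $[1,H]$, set the corresponding $\lev{i}{\tau}$ to $+\infty$.) The essential property is that the target virtual values are \emph{the same across bidders}, so ``bidder $i$ is at level $\tau$'' means $\phi_i(v_i)$ lies in the $\tau$-th bucket of this common grid, independently of $i$.

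Because levels correspond to a uniform virtual-value grid and $\phi_i$ is monotone, the bidder with the largest virtual value sits at the highest occupied level. Therefore Myerson's winner $\winner$ and the $t$-level winner $\mwinner$ always lie in the same bucket. Moreover both auctions sell exactly when some bidder has $\phi\ge 0$, so they abstain on the same inputs. On inputs where both sell, when $\phi_\winner(v_\winner)\ge 1$ the bucket has multiplicative width $1+\epsilon$, giving $\phi_\mwinner(v_\mwinner) \ge \phi_\winner(v_\winner)/(1+\epsilon)$; when $\phi_\winner(v_\winner)\in[0,1]$ the bucket has additive width $\epsilon$, giving $\phi_\mwinner(v_\mwinner) \ge \phi_\winner(v_\winner)-\epsilon$. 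Taking expectations and splitting on the two regimes yields
\[
\E[\phi_\mwinner(v_\mwinner)] \;\ge\; (1-\epsilon)\,\E[\phi_\winner(v_\winner)] \;-\; \epsilon.
\]
The $t$-level auction is truthful (monotone allocation with critical-price payments), so Myerson's revenue/virtual-welfare identity converts both sides into expected revenue. Finally, since valuations lie in $[1,H]$, the price~$1$ is accepted by at least one bidder, so the optimal expected revenue $\E[\phi_\winner(v_\winner)]$ is at least~$1$; the additive $\epsilon$ is absorbed into an $O(\epsilon)$-multiplicative loss, and rescaling $\epsilon$ by a constant gives the claimed $(1-\epsilon)$-approximation.

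The main subtlety I expect is the non-regular case: Myerson's optimal auction uses the ironed $\bar\phi_i$, which is piecewise constant on ironing intervals, so many values $v_i$ can map to the same target and the thresholds must be defined with respect to $\bar\phi_i$ rather than $\phi_i$. This does not weaken the per-input virtual-welfare comparison (tied bidders contribute identical ironed virtual values), but one has to verify that the $t$-level auction's deterministic lexicographic tie-breaking does not lose expected ironed virtual welfare relative to Myerson's tie-breaking inside an ironing interval --- which is true precisely because $\bar\phi_i$ is constant there, so any choice among tied bidders yields the same contribution to expected revenue. Checking this carefully, together with verifying that the additive-then-multiplicative grid indeed uses only $O(1/\epsilon+\log_{1+\epsilon} H)$ thresholds, completes the argument.
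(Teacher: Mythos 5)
Your proof is correct and takes essentially the same route as the paper's: the paper proves a slightly more general Lemma (with parameters $\alpha,\gamma$, later reused in the MHR section) of which Theorem~\ref{thm:rep-bounded} is the $\alpha=\gamma=1$ specialization, and that lemma uses exactly your construction (monopoly reserve, then an additive virtual-value grid of width $\Theta(\epsilon)$ up to~$1$, then a multiplicative $(1+\Theta(\epsilon))$ grid up to~$H$), the same observation that the two winners land in a common bucket so virtual values differ by the bucket width, the same passage from expected virtual welfare to revenue, and the same step of absorbing the additive $O(\epsilon)$ loss using $\E[\rev(\M)]\ge 1$. Your discussion of ironing/tie-breaking is handled in the paper by assuming the ironed virtual values can be taken strictly increasing at negligible revenue cost, but your alternative observation (constancy on ironing intervals makes tie-breaking revenue-neutral) is also sound.
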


With an eye toward our generalizations, we prove the following more
general result.  Theorem~\ref{thm:rep-bounded} follows immediately by
taking $\alpha=\gamma=1$.

\begin{lemma}\label{lem:levels}
  Consider $n$ bidders with valuations in $[0,H]$ and with
  $\pr[\max_{i} v_i > \alpha] \geq \gamma$. Then, $\Ct$ contains a
  single-item auction with expected revenue at least a $1-\epsilon$
  times that of an optimal auction, for
  $t = \Theta\left(\frac{1}{\gamma\epsilon} +
    \log_{1+\epsilon}\frac{H}{\alpha} \right)$.
\end{lemma}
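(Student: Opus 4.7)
My plan is to construct an explicit $t$-level auction in $\Ct$ that approximately simulates Myerson's optimal auction by using the thresholds to discretize bidders' virtual valuation functions. By Myerson's lemma, the expected revenue of any truthful single-item auction equals its expected virtual welfare, so it suffices to exhibit thresholds under which the $t$-level allocation almost always selects a bidder whose virtual value is close to $\max_i \phi_i(v_i)^+$.

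First I would pick a common virtual-value grid $0 = u_0 < u_1 < \cdots < u_{t-1}$ and define $\ell_{i,\tau}$ to be the smallest $v$ with $\phi_i(v) \ge u_\tau$ (taking $\ell_{i,\tau} = H$ if no such $v$ exists). The grid uses additive spacing $u_\tau = \tau \cdot \epsilon\alpha\gamma$ for $\tau = 0,1,\ldots,\lceil 1/(\epsilon\gamma)\rceil$ (covering the interval $[0,\alpha]$) and then geometric spacing $u = \alpha(1+\epsilon)^k$ over $[\alpha,H]$, giving the claimed $t = \Theta(1/(\epsilon\gamma) + \log_{1+\epsilon}(H/\alpha))$ thresholds. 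With this choice, bidder $i$'s level is exactly the index of the grid cell containing $\phi_i(v_i)$, so the $t$-level winner is the bidder whose virtual value lies in the highest occupied cell, and no sale happens exactly when every bidder has negative virtual value --- matching Myerson in that case.

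Next I would compare the two auctions pointwise. On any sample where both sell, the $t$-level winner $i^*$ shares a cell with Myerson's winner, so their virtual values differ by at most the cell width $w$. In the additive region $w \le \epsilon\alpha\gamma$, and in the geometric region $w \le \epsilon \cdot \max_i \phi_i(v_i)$. Taking expectations,
\[
\text{OPT} - \rev(\Ct) \;\le\; \epsilon\alpha\gamma \;+\; \epsilon \cdot \E[\max_i \phi_i(v_i)^+] \;=\; \epsilon\alpha\gamma + \epsilon \cdot \text{OPT}.
\]
To close the loop, note that the fixed-reserve auction with reserve $\alpha$ attains expected revenue at least $\alpha \cdot \pr[\max_i v_i > \alpha] \ge \alpha\gamma$, so $\text{OPT} \ge \alpha\gamma$ and the total error is at most $2\epsilon \cdot \text{OPT}$. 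Absorbing the factor of $2$ into $\epsilon$ (which only changes constants in $t$) gives the stated $(1-\epsilon)$ guarantee.

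The main obstacle I anticipate is mostly bookkeeping around tie-breaking and the non-regular case. Lexicographic tie-breaking inside the top cell is harmless, since Myerson's lemma still identifies revenue with the (single) winner's virtual value, which is within $w$ of the cell's maximum regardless of which bidder is chosen. For non-regular $F_i$, I would replace $\phi_i$ throughout by the ironed virtual value $\bar\phi_i$ and set $\ell_{i,\tau}$ to the left endpoint of $\{v : \bar\phi_i(v) \ge u_\tau\}$; ironed intervals contribute equal ironed virtual values to an entire level, so the ``revenue equals expected ironed virtual welfare'' identity transports the error analysis verbatim.
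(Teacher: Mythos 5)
Your proposal is correct and mirrors the paper's proof essentially step for step: the same additive-then-geometric grid in virtual-value space pulled back through $\phi_i^{-1}$ (with lowest threshold at the monopoly reserve), the same pointwise comparison showing the $t$-level and Myerson winners share a level whenever both sell, the same additive/multiplicative error split, and the same use of $\text{OPT} \ge \alpha\gamma$ via the reserve-$\alpha$ auction to convert the additive $\epsilon\alpha\gamma$ slack into a multiplicative loss. The only difference is cosmetic bookkeeping (you absorb a factor of~2 at the end; the paper carries an $\epsilon' = \epsilon/2$ from the start).
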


\begin{proof}
  Consider a fixed bidder $i$. We define $\t $ thresholds for $i$,
  bucketing $i$ by her virtual value, and prove that the $\t$-level
  auction $\A$ using these thresholds for each bidder closely
  approximates the expected revenue of the optimal auction~$\M$. Let
  $\epsilon'$ be a parameter defined later.

  Set $\lev{i}{0} = \vvi{i}{0}$, bidder $i$'s monopoly
  reserve.\footnote{Recall from Section~\ref{sec:bayesian} that
    $\phi_i$ denotes the virtual valuation function of bidder $i$.
For the non-regular case, $\phi_i$ denotes the ironed virtual
valuation functions.
It is convenient to assume that these functions are
    strictly increasing (not just nondecreasing); this can be enforced
    at the cost of losing an arbitrarily small amount of revenue.}
  For
  $\ind \in \lbrack 1, \lceil \frac{1}{\gamma \epsilon'}\rceil
  \rbrack$,
  let $\lev{i}{\ind} = \vvi{i}{\ind \cdot \alpha \gamma \epsilon'}$
  ($\phi_i \in [0,1]$).  For
  $\ind\in \lbrack\lceil \frac{1}{\gamma \epsilon'}\rceil, \lceil
  \frac{1}{\gamma \epsilon'}\rceil + \lceil \log_{1 +
    \frac{\epsilon}{2}} \frac{H}{\alpha} \rceil \rbrack$,
  let
  $\lev{i}{\ind} = \vvi{i}{\alpha(1+ \tfrac{\epsilon}{2})^{\ind -
      \lceil \frac{1}{\gamma \epsilon'}\rceil} }$ ($\phi_i > 1$).

  Consider a fixed valuation profile $\v$. Let $\winner$ denote the
  winner according to $\A$, and $\mwinner$ the winner according to the
  optimal auction $\M$. If there is no winner, we interpret
  $\vv{\winner}{v_{\winner}}$ and $\vv{\mwinner}{v_{\mwinner}}$ as 0.
  Recall that $\M$ always awards the item to a bidder with the highest
  positive virtual value (or no one, if no such bidders exist).  The
  definition of the thresholds immediately implies the following.

\begin{enumerate}[noitemsep,nolistsep]
\item $\A$ only allocates to non-negative ironed virtual-valued
  bidders. \label{nnvv}
\item If there is no tie (that is, there is a unique bidder at the
  highest level), then $\mwinner  = \winner$.\label{agree} 
\item When there is a tie at level $\ind$, the virtual value of the
  winner of $\A$ is close to that of $\M$:

 If $\ind \in \lbrack 0, \lceil \frac{1}{\gamma\epsilon'}\rceil\rbrack$ then
  $\vv{\mwinner}{v_{\mwinner}} -\vv{\winner}{v_{\winner}} \leq
\alpha \gamma \epsilon' $;  

if
$\ind\in \lbrack\lceil \frac{1}{\gamma\epsilon'}\rceil , \lceil
\frac{1}{\gamma\epsilon'}\rceil + \lceil \log_{1 + \frac{\epsilon}{2}}
\frac{H}{\alpha} \rceil \rbrack$,
$\frac{\vv{\winner}{v_{\winner}}}{\vv{\mwinner}{v_{\mwinner}}} \geq 1
- \frac{\epsilon}{2}$.
\end{enumerate}

These facts imply that
\begin{equation}\label{eq:rep1}
\E_{\v}\lbrack\rev(\A)\rbrack = 
\E_{\v}\lbrack\vv{\winner}{v_\winner}\rbrack 
\geq 
(1-\tfrac{\epsilon}{2})
\cdot 
\E_{\v}\lbrack\vv{\mwinner}{v_{\mwinner}}\rbrack - \alpha\gamma\epsilon'
=
(1-\tfrac{\epsilon}{2})
\cdot 
\E_{\v}\lbrack\rev(\M)\rbrack - \alpha\gamma\epsilon'.
\end{equation}
where the first and final equality follow from $\A$ and $\M$'s
allocations depending on ironed virtual values, not on the values
themselves, thus, the ironed virtual values are equal in expectation
to the unironed virtual values, thus the revenue, of the mechanisms
(see~\cite{hartline2015}, Chapter 3.5 for discussion).

The assumption that $\pr[\max_i v_i > \alpha] \geq \gamma$ implies
(since a feasible auction prices the good at $\alpha$ and awards it to
any bidder with value at least $\alpha$, if any) that
$ \E\lbrack \rev(\M) \rbrack \geq \alpha\gamma.$ Combining this with
~\eqref{eq:rep1}, and setting $\epsilon' = \frac{\eps}{2}$ implies
$ \E_{\v}\lbrack\rev(\A)\rbrack \ge
\left(1-\epsilon\right)\E_{\v}\lbrack\rev(\M)\rbrack$.
\end{proof}

Combining Theorems~\ref{thm:fat-sample} and~\ref{thm:rep-bounded}
yields the following Corollary~\ref{cor:sample-single}.

\begin{corollary}\label{cor:sample-single}
  Let $F$ be a product distribution with all bidders' valuations in
  $[1,H]$.  Assume that $\t = \Theta\left(\bounded\right)$ and
  $m =
  O\left(\left(\frac{H}{\epsilon}\right)^2\left(n\t\log\left(n\t\right)
      \log\frac{H}{\epsilon} + \log \frac{1}{\delta}\right)\right) =
  \tilde{O}\left(\frac{H^2n}{\epsilon^3}\right).$
  Then with probability at least $1-\delta$, the single-item empirical
  revenue maximizer of $\Ct$ on a set of $m$ samples from $F$ has
  expected revenue at least $1-\epsilon$ times that of the optimal
  auction.
\end{corollary}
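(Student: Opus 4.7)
The plan is to chain together the three main ingredients already on the table: the representation bound in Theorem~\ref{thm:rep-bounded}, the pseudo-dimension bound in Theorem~\ref{thm:fat-single}, and the generic uniform-convergence bound in Theorem~\ref{thm:fat-sample}. Concretely, I will split the error budget into $\eps/2$ for representation and $\eps/2$ for learning. First I would invoke Theorem~\ref{thm:rep-bounded} with target accuracy $\eps/2$ to produce an auction $c^\star \in \Ct$ with $\E_{\v}[\rev(c^\star)] \ge (1-\tfrac{\eps}{2})\E_{\v}[\rev(\M)]$; the hypothesis $t = \Theta(\bounded)$ in the corollary is precisely what this invocation requires (up to constants absorbed in the $\Theta(\cdot)$).

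Next, Theorem~\ref{thm:fat-single} gives $\pd_{\Ct} = O(nt \log(nt))$. Plugging this pseudo-dimension, the range $[0,H]$ (valuations lie in $[1,H]$, so revenues do as well), and accuracy parameter $\eps/2$ into Theorem~\ref{thm:fat-sample} yields that a sample of size
\[
m \;=\; O\!\left(\Bigl(\tfrac{H}{\eps}\Bigr)^{2}\Bigl(nt\log(nt)\log\tfrac{H}{\eps} + \log\tfrac{1}{\delta}\Bigr)\right)
\]
suffices so that, with probability at least $1-\delta$, the empirical revenue $\rev_S(c)$ is within $\eps/2$ of the true expected revenue $\E_{\v}[\rev(c)]$ \emph{simultaneously} for every $c \in \Ct$. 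Substituting the value of $t$ gives the $\tilde{O}(H^2 n /\eps^3)$ bound claimed in the corollary.

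Let $\hat c$ be the empirical revenue maximizer on $S$. Conditioning on the uniform-convergence event, a standard three-step chain gives
\[
\E_{\v}[\rev(\hat c)] \;\ge\; \rev_S(\hat c) - \tfrac{\eps}{2} \;\ge\; \rev_S(c^\star) - \tfrac{\eps}{2} \;\ge\; \E_{\v}[\rev(c^\star)] - \eps \;\ge\; (1-\tfrac{\eps}{2})\E_{\v}[\rev(\M)] - \eps,
\]
using $\hat c$'s optimality on $S$ in the middle inequality and uniform convergence in the outer two.

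The only non-mechanical step is converting this additive guarantee into the multiplicative $(1-\eps)$-approximation stated in the corollary. For this I would exploit the lower support bound: since all $v_i \ge 1$, posting a take-it-or-leave-it price of $1$ sells the item with certainty and collects revenue at least $1$, so $\E_{\v}[\rev(\M)] \ge 1$. Therefore $\eps \le \eps \cdot \E_{\v}[\rev(\M)]$ (up to a factor of $2$ that is absorbed by rescaling $\eps$ at the beginning), which turns the $-\eps$ additive slack into a $-\tfrac{\eps}{2}$ multiplicative one and delivers $\E_{\v}[\rev(\hat c)] \ge (1-\eps)\E_{\v}[\rev(\M)]$. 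I do not anticipate any genuine obstacle: the corollary is essentially a bookkeeping exercise, with the only subtlety being the additive-to-multiplicative conversion, which is handled cleanly by the $[1,H]$ normalization.
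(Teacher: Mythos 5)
Your proposal is correct and takes exactly the route the paper implicitly intends (the paper gives no explicit proof; it simply says the corollary follows by "combining Theorems~\ref{thm:fat-sample} and~\ref{thm:rep-bounded}"). You have filled in the bookkeeping accurately, including the one step the paper leaves silent: converting the additive learning error to a multiplicative guarantee via $\E_{\v}[\rev(\M)] \ge 1$, which holds because all valuations are at least~$1$.
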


\section{Unbounded MHR Distributions}\label{sec:mhr}

This section shows how to replace the assumption of bounded valuations
by the assumption that each valuation distribution satisfies the
monotone hazard rate (MHR) condition, meaning that
$\tfrac{f_i(v_i)}{1-F_i(v_i)}$ is nondecreasing.  Our resulting sample
complexity bounds depend on the number of bidders~$n$ and the error
parameter~$\eps$ only.  bounded case, following ideas from This
extension is based on previous work~\cite{cai2011extreme} that
effectively reduces the case of MHR valuations to the case of
valuations lying in the interval
$\left[\beta\epsilon, 2\beta\log{\frac{1}{\epsilon}}\right]$ for a
suitable choice of $\beta$. Our analysis works with
$\eta$-\emph{truncated} $\t-$level auctions, where each $\t$-level
auction $f$ is replaced with $f_\eta = \min(f, \eta)$.

\begin{theorem}\label{thm:rep-mhr}
  Suppose $\F$ is a product distribution and each bidder's valuation
  distribution satisfies the MHR condition. Then, for each $\eps > 0$, and each $\hat{\beta} \geq \beta$ such that 
  $\pr\left[\max_i \v_i \geq \frac{\hat{\beta}}{2}\right]\geq 1 -
  \frac{1}{\sqrt{e}} - \eps'$,
  there is a $\t $-level
  $\left(\hat{\beta}\log\frac{1}{\epsilon'}\right)$-truncated auction
  with expected revenue at least $1-\eps$ times that of an optimal
  auction, where
  $\t = \Theta\left(\frac{1}{\epsilon'} +
    \log_{1+\epsilon'}\left(\log\frac{1}{\epsilon'}\right)\right)$
  and $\epsilon' = O\left(\frac{\epsilon}{\log\frac{1}{\epsilon}}\right)$.
\end{theorem}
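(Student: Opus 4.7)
The plan is to reduce the MHR case to the bounded-valuation case by composing two ingredients: (i) a structural property of MHR distributions (from \citet{cai2011extreme}) that says the optimal revenue is essentially concentrated in a narrow value window of the form $\left[\hat\beta/2,\, \hat\beta\log(1/\eps')\right]$, and (ii) Lemma~\ref{lem:levels}, applied on that window with parameters $\alpha = \hat\beta/2$, $H = \hat\beta\log(1/\eps')$, and $\gamma = 1 - 1/\sqrt{e} - \eps'$. The truncation operator $f_\eta = \min(f,\eta)$ with $\eta = \hat\beta\log(1/\eps')$ is what makes (ii) legitimate: it caps every sample's revenue at the top of the window, so the distribution as seen by the auction is effectively supported on $[0,\eta]$.

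First, I would invoke the MHR concentration lemma of \cite{cai2011extreme}. This gives a $\beta$ such that (a) $\pr[\max_i v_i \ge \beta/2]$ is at least a constant close to $1 - 1/\sqrt{e}$, and (b) the contribution to $\E[\rev(\M)]$ from samples where some $v_i$ exceeds $\hat\beta\log(1/\eps')$ is an $O(\eps')$ fraction of the total optimal revenue. This second fact is what lets us pass from $\M$ to its truncated counterpart $\M_\eta$ with $\eta = \hat\beta\log(1/\eps')$: we have $\E[\rev(\M_\eta)] \ge (1-O(\eps')) \E[\rev(\M)]$. Property (a) is exactly the hypothesis $\pr[\max_i v_i \ge \hat\beta/2] \ge 1 - 1/\sqrt{e} - \eps'$ appearing in the theorem statement, used with $\gamma = 1 - 1/\sqrt{e} - \eps'$.

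Next, apply Lemma~\ref{lem:levels} to the truncated problem, treating the distribution as if it were supported on $[0,\hat\beta\log(1/\eps')]$ and using $\alpha = \hat\beta/2$, $\gamma$ as above. The lemma then produces a $t$-level auction $\A$ whose expected revenue (computed against the truncated instance) is at least $1 - \eps'$ times the optimal truncated revenue, for
\[
t = \Theta\!\left(\frac{1}{\gamma\eps'} + \log_{1+\eps'}\frac{\hat\beta\log(1/\eps')}{\hat\beta/2}\right)
   = \Theta\!\left(\frac{1}{\eps'} + \log_{1+\eps'}\log(1/\eps')\right),
\]
since $\gamma$ is bounded below by a positive constant. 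Composing the two steps, the $\eta$-truncated $t$-level auction $\A_\eta$ achieves expected revenue at least $(1-\eps')(1-O(\eps'))\E[\rev(\M)] \ge (1 - O(\eps'\log(1/\eps')))\E[\rev(\M)]$ once one also bookkeeps the low-end contribution (values below the monopoly reserve contribute nothing, exactly as in the proof of Lemma~\ref{lem:levels}). Choosing $\eps' = \Theta(\eps/\log(1/\eps))$ makes the total loss at most $\eps$, giving the claimed bound.

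The main obstacle is doing the bookkeeping carefully enough that the truncation loss and the level-discretization loss compose multiplicatively rather than blowing up. In particular, I need to verify that the virtual-value argument of Lemma~\ref{lem:levels} still goes through when the revenue is measured through the truncation $\min(\cdot,\eta)$: the ironed virtual-value equals-revenue identity needs to be invoked on the truncated mechanism, which is fine because truncation preserves truthfulness (it is a pointwise cap on an already truthful payment rule) and preserves the allocation rule. A secondary subtlety is that the factor $\log(1/\eps')$ sneaks into $t$ only inside the $\log_{1+\eps'}$, so even the unfavorable substitution $\eps' = \Theta(\eps/\log(1/\eps))$ leaves $t$ polynomial in $1/\eps$, as needed for the subsequent sample-complexity bound.
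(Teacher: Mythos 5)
Your approach matches the paper's: condition on (equivalently, truncate to) the window $\bigl[\,\hat\beta/2,\; c\hat\beta\log(1/\eps')\,\bigr]$, invoke Lemma~\ref{lem:levels} there with $\alpha = \hat\beta/2$, $\gamma = 1 - 1/\sqrt{e} - \eps'$, and $H = c\hat\beta\log(1/\eps')$, bound the tail loss via Theorem~\ref{thm:anchoring}, and tune $\eps'$. One bookkeeping slip, though: the tail contribution is \emph{not} an $O(\eps')$ fraction of $\E[\rev(\M)]$, as you claim in ingredient (b). Theorem~\ref{thm:anchoring} gives an absolute tail bound of $36\beta\eps'\log(1/\eps')$, and since $\E[\rev(\M)] = \Omega(\hat\beta)$ (a posted price of $\hat\beta/2$ sells with constant probability), the tail fraction is $O\bigl(\eps'\log(1/\eps')\bigr)$. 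This extra $\log(1/\eps')$ factor is exactly why you need $\eps' = \Theta(\eps/\log(1/\eps))$ rather than $\eps' = \Theta(\eps)$; under your stated version of (b), $\eps' = \Theta(\eps)$ would suffice and you would have (wrongly) proved a slightly stronger theorem. Your final inequality $(1-\eps')(1-O(\eps')) \ge 1 - O(\eps'\log(1/\eps'))$ is technically valid but papers over the error rather than using it: if the middle factor were really $1-O(\eps')$ you should keep the tighter product. Correcting (b) to $O(\eps'\log(1/\eps'))$ closes the argument cleanly and makes the choice of $\eps'$ forced rather than coincidental. One other minor point: the truncation $f_\eta = \min(f,\eta)$ is an analytical device (a pointwise cap on the revenue \emph{function} so the range is bounded for the pseudo-dimension machinery), not a modification of the mechanism, so the remark about ``truncation preserves truthfulness'' isn't quite the right framing — there is no auction being truncated, only the function whose empirical average the learner optimizes, and the true auction's revenue only dominates the truncated surrogate.
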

%

Before proving Theorem~\ref{thm:rep-mhr}, we quote a key fact 
about MHR distributions~\citet{cai2011extreme}.

\begin{theorem}[Theorem~19 and Lemma~38
  of~\cite{cai2011extreme}]\label{thm:anchoring}
  Let $X_1, \ldots, X_n$ be a collection of independent random
  variables whose distributions satisfy the MHR condition. Then there
  exists an 
  anchoring point $\beta$ such that
  $$\pr[\max_i X_i \geq \frac{\beta}{2}]\geq 1 - \frac{1}{\sqrt{e}},$$
  and for all $\epsilon > 0$,
  \[\int_{2\beta\log\frac{1}{\epsilon}}^{\infty} z f_{\max_i X_i}(z)dz
  \leq 36\beta\epsilon\log\frac{1}{\epsilon}.\]
\end{theorem}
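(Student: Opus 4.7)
The plan is to reduce Theorem~\ref{thm:rep-mhr} to Lemma~\ref{lem:levels} by using the anchoring point $\hat\beta$ to effectively confine the valuations to the bounded window $[\hat\beta/2,\,\eta]$ with $\eta := \hat\beta\log(1/\eps')$, and then absorb the revenue lost above $\eta$ into the $(1-\eps)$ factor. First I would observe that the hypothesis $\pr[\max_i v_i \ge \hat\beta/2] \ge 1 - 1/\sqrt e - \eps'$ implies that a uniform posted price of $\hat\beta/2$ already collects expected revenue at least $(1 - 1/\sqrt e - \eps')\hat\beta/2 = \Omega(\hat\beta)$, so $\E_F[\rev(\M)] \ge c\hat\beta$ for an absolute constant $c > 0$. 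Second, I would apply Theorem~\ref{thm:anchoring} (valid since $\hat\beta \ge \beta$, with the factor of $2$ in that theorem's cutoff absorbed by adjusting $\eps'$ by a constant) to obtain $\E_F[\max_i v_i \cdot \mathbb{1}[\max_i v_i > \eta]] \le O(\hat\beta\eps'\log(1/\eps'))$. Coupling $F$ with its truncation $F^\eta$ via $v_i^\eta := \min(v_i,\eta)$ and noting that the revenue of any fixed auction on $\v$ and $\v^\eta$ can differ only on $\{\max_i v_i > \eta\}$ and by at most $\max_i v_i$ there, I would conclude
\[
\E_{F^\eta}\!\left[\rev\!\left(\M^{*}_{F^\eta}\right)\right] \ge \E_F[\rev(\M)] - O\!\left(\hat\beta\eps'\log\tfrac{1}{\eps'}\right),
\]
where $\M^{*}_{F^\eta}$ denotes the optimal auction tailored to $F^\eta$.

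Next I would invoke Lemma~\ref{lem:levels} on $F^\eta$ with $\alpha = \hat\beta/2$, $\gamma = 1 - 1/\sqrt e - \eps' = \Theta(1)$, $H = \eta$, and error parameter $\eps'$, obtaining a $t$-level auction $\A$ whose thresholds all lie in $[0,\eta]$ and whose expected revenue on $F^\eta$ is at least $(1-\eps')\E_{F^\eta}[\rev(\M^{*}_{F^\eta})]$, with
\[
t = \Theta\!\left(\tfrac{1}{\gamma\eps'} + \log_{1+\eps'}\tfrac{\eta}{\hat\beta/2}\right) = \Theta\!\left(\tfrac{1}{\eps'} + \log_{1+\eps'}\log\tfrac{1}{\eps'}\right),
\]
matching the stated bound. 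Because every threshold of $\A$ is at most $\eta$, any bidder with $v_i > \eta$ sits at $\A$'s top level regardless of the precise value of $v_i$, so the level assignments, allocation, and payments of $\A$ agree pointwise under $F$ and $F^\eta$; in particular $\E_F[\rev(\A)] = \E_{F^\eta}[\rev(\A)]$, and moreover every $\A$-payment is at most $\eta$, so $\A_\eta = \min(\A,\eta) = \A$ and the $\eta$-truncation is free. Chaining these (in)equalities and choosing $\eps' = \Theta(\eps/\log(1/\eps))$ so that $\eps'\log(1/\eps') = O(\eps)$, then applying the lower bound $\E_F[\rev(\M)] \ge c\hat\beta$ to convert the additive error into a multiplicative one, delivers $\E_F[\rev(\A_\eta)] \ge (1-\eps)\E_F[\rev(\M)]$.

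The main obstacle is the upper-tail accounting. Theorem~\ref{thm:anchoring} only supplies an $L^1$-style additive bound $O(\hat\beta\eps'\log(1/\eps'))$ on the revenue lost above $\eta$, which on its own is useless; it becomes a multiplicative $(1-\eps)$ guarantee only after being matched against the lower bound $\E_F[\rev(\M)] = \Omega(\hat\beta)$ coming from the constant-fraction anchoring probability. It is exactly this matching of scales that forces the inflation $\eps' \sim \eps/\log(1/\eps)$ appearing in the theorem statement. Once this accounting is in place, the remaining steps are routine: Lemma~\ref{lem:levels} handles the approximation inside the bounded window, and the invariance of $t$-level auctions with bounded thresholds under truncation of individual values glues the truncated and original regimes together without further cost.
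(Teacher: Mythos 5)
Your proposal does not prove the statement in question. The statement is Theorem~\ref{thm:anchoring} itself --- the existence, for independent MHR random variables $X_1,\ldots,X_n$, of an anchoring point $\beta$ with $\pr[\max_i X_i \ge \beta/2] \ge 1 - 1/\sqrt{e}$ together with the tail-integral bound $\int_{2\beta\log(1/\eps)}^{\infty} z f_{\max_i X_i}(z)\,dz \le 36\beta\eps\log(1/\eps)$. What you have written is instead a proof sketch of Theorem~\ref{thm:rep-mhr} (the representation-error bound for MHR distributions), and your second step explicitly \emph{invokes} Theorem~\ref{thm:anchoring} as a black box to control the mass above $\eta$. Relative to the assigned statement this is circular: you assume the very result you were asked to establish. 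Nothing in your argument uses the MHR condition to produce the anchoring point or the exponential tail decay of $\max_i X_i$ beyond it, which is the entire content of the theorem; a genuine proof would have to define $\beta$ from the distributions (roughly, the largest of the per-bidder quantiles at which $1-F_i$ drops to $e^{-1/2}$), use monotonicity of the hazard rate to show $\pr[\max_i X_i > c\beta]$ decays exponentially in $c$, and integrate that tail to get the $36\beta\eps\log(1/\eps)$ bound. Note also that the paper itself offers no proof of this statement --- it is imported verbatim from Theorem~19 and Lemma~38 of \cite{cai2011extreme} --- so there is no internal argument for you to have matched.

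As a secondary remark: read as a proof of Theorem~\ref{thm:rep-mhr}, your sketch tracks the paper's own argument quite closely (condition on $\max_i v_i \le 2\hat\beta\log(1/\eps')$, apply Lemma~\ref{lem:levels} with $\alpha = \hat\beta/2$, $\gamma = 1-1/\sqrt{e}-\eps'$, and $H = 2\hat\beta\log(1/\eps')$, charge the tail loss against $\rev(\M) = \Omega(\hat\beta)$, and set $\eps' = \Theta(\eps/\log(1/\eps))$). But that is a different theorem from the one you were asked to prove.
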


Now, we proceed to prove Theorem~\ref{thm:rep-mhr}.

\vspace{.1in}
\noindent
\begin{prevproof}{Theorem}{thm:rep-mhr}
  Fix $\epsilon'$, to be defined later.  Conditioning on all bids
  being at most $2\hat{\beta}\log\frac{1}{\epsilon'}$ allows us apply
  Lemma~\ref{lem:levels} as though the valuations are bounded. In
  particular, since
  $\pr[\max_i v_i \geq \frac{\hat{\beta}}{2}]\geq 1 -
  \frac{1}{\sqrt{e}} - \eps'$,
  Lemma~\ref{lem:levels} implies for $\alpha = \frac{\hat\beta}{2}$,
  $\gamma = 1 - \frac{1}{\sqrt{e}}- \eps'$ and
  $H = 2\hat{\beta}\log\frac{1}{\epsilon'}$, implies the existence of
  a
  $\t = O\left(\frac{1}{\eps'}+
    \log_{1+\eps'}\left(\log\frac{1}{\eps'}\right)\right)$-level
  truncated\footnote{Lemma~\ref{lem:levels} only implies the existence
    of such a $\t$-level auction. However, when the bids are all below
    some $\eta$, one can always find an $\eta$-truncated auction which
    is equivalent to each untruncated auction.} auction $\A$ such
  that:
  \begin{align}
    \E\left[\rev(\A)|\max_i \v_i  \leq 2\hat{\beta}\log\frac{1}{\epsilon'}\right]
&\geq (1-\epsilon') \E\left[\rev(\M)|\max_i \v_i \leq 2\hat{\beta}\log\frac{1}{\epsilon'}\right]\label{eqn:cond}
\end{align}

Thus, we have
\begin{align*}
\E\left[\rev(\A)\right] &\geq \E\left[\rev(\A)|\max_i \v_i \leq 2\hat{\beta}\log\frac{1}{\epsilon'}\right] \pr\left[\max_i \v_i \leq 2\hat{\beta}\log\frac{1}{\epsilon'}\right]\\
&\geq (1-\epsilon') \E\left[\rev(\M)|\max_i \v_i\leq
   2\hat{\beta}\log\frac{1}{\epsilon'}\right]\pr\left[\max_i \v_i \leq 2\hat{\beta}\log\frac{1}{\epsilon'}\right] \\
&\geq (1-\epsilon') \E\left[\rev(\M)\right] - 36\hat{\beta}\epsilon' \log\frac{1}{\epsilon'}\\
& \geq \left(1-O\left(\epsilon'\log\frac{1}{\epsilon'}\right)\right) \E\left[\rev(\M)\right]
\end{align*}

where the first inequality comes from the fact that $\A$ only sells to
agents with non-negative virtual value (so the revenue on a smaller
region of bids is only less), the second from Equation~\ref{eqn:cond}
and probabilities being at most $1$, the penultimate from
Theorem~\ref{thm:anchoring}, and the final from the fact that
$\rev(\M) \geq \frac{1-\frac{1}{\sqrt{e}}}{2}\hat{\beta}$. Setting
$\epsilon = \epsilon'\log\frac{1}{\epsilon'}$ and noticing that this
implies $\epsilon' \leq \frac{\epsilon}{\log\frac{1}{\epsilon}}$
yields the desired result.
\end{prevproof}

Corollary~\ref{cor:mhr-sample} follows as a corollary of
Theorems~\ref{thm:fat-sample} and~\ref{thm:rep-mhr}.

\begin{corollary}\label{cor:mhr-sample}
  With probability $1-\delta$, the empirical revenue maximizer for a
  sample of size $m$ of the class of $\t $-level $\eta$-truncated
  single-item auctions is a $1-O\left(\epsilon\right)$-approximation
  of the optimal auction for $n$ MHR bidders, for
  $\t = O\left(\frac{1}{\epsilon'} +
    \log_{1+\epsilon'}\left(\log\frac{1}{\epsilon'}\right)\right)$,
  $\epsilon' = \frac{\epsilon}{\log\frac{1}{\epsilon}}$ and
  \[m = O\left(\left(\frac{1}{\eps'}\right)^2\left(nt\log\left(nt\right)\ln\frac{1}{\eps'} + \ln\frac{1}{\delta}\right)\right) = \tilde{O}\left(\frac{n}{\eps^3}\right)\]
where $\eta$ can be learned from the set of $m$ samples.
\end{corollary}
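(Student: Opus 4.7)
The plan is to combine the representation-error bound of Theorem~\ref{thm:rep-mhr} with the pseudo-dimension bound of Theorem~\ref{thm:fat-single} and the uniform-convergence guarantee of Theorem~\ref{thm:fat-sample}, paying special attention to the fact that the truncation parameter $\eta = \hat\beta \log(1/\epsilon')$ is itself data-dependent. Set $\epsilon' = \epsilon/\log(1/\epsilon)$ at the outset, and split the $m$ samples into two halves: one to estimate $\hat\beta$, one to perform empirical revenue maximization over the resulting $\t$-level $\eta$-truncated class.

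First I would estimate $\hat\beta$ from the first half of the samples by picking (roughly) twice the empirical $(1 - 1/\sqrt{e} - \epsilon'/2)$-quantile of $\max_i v_i$. A standard Chernoff/DKW bound says $O((\epsilon')^{-2} \log(1/\delta))$ samples suffice to estimate this quantile to additive probability error $\epsilon'/2$ with confidence $1 - \delta/3$; consequently, with high probability the resulting $\hat\beta$ satisfies the hypothesis $\pr[\max_i v_i \geq \hat\beta/2] \geq 1 - 1/\sqrt{e} - \epsilon'$ required by Theorem~\ref{thm:rep-mhr}. By that theorem, the class of $\t$-level $\eta$-truncated auctions with $\t$ and $\eta$ as in the statement contains some auction $\A^*$ with $\E[\rev(\A^*)] \geq (1 - O(\epsilon)) \E[\rev(\M)]$.

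Next I would apply Theorem~\ref{thm:fat-sample} to the $\t$-level $\eta$-truncated class. The proof of Theorem~\ref{thm:fat-single} goes through essentially verbatim after truncation: truncating each auction at $\eta$ can split each equivalence class into at most a constant number of sub-classes (depending on whether a payment is below or at the cap), so the pseudo-dimension is still $O(nt \log nt)$. Since all revenues now lie in $[0, \eta]$ with $\eta = \hat\beta \log(1/\epsilon')$, Theorem~\ref{thm:fat-sample} applied with ``$H$'' set to $\eta$ and accuracy $\epsilon'$ guarantees that, with $m = \tilde O(n/(\epsilon')^3)$ samples and probability $1 - \delta/3$, empirical revenue maximization returns an auction $\hat\A$ whose true expected revenue is within additive $\epsilon' \eta = O(\epsilon' \log(1/\epsilon')) \cdot \hat\beta$ of $\E[\rev(\A^*)]$.

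The main obstacle is converting this additive $\epsilon' \eta$ slack into the desired multiplicative $1 - O(\epsilon)$ factor. This is where the anchoring property pays off: since $\pr[\max_i v_i \geq \hat\beta/2] \geq 1 - 1/\sqrt{e} - \epsilon'$, posting a price of $\hat\beta/2$ is a feasible single-item auction with expected revenue $\Omega(\hat\beta)$, so $\E[\rev(\M)] = \Omega(\hat\beta)$. Hence the additive error $O(\epsilon' \log(1/\epsilon')) \cdot \hat\beta = O(\epsilon) \cdot \hat\beta$ is bounded by $O(\epsilon) \cdot \E[\rev(\M)]$, and combining with the representation bound above gives $\E[\rev(\hat\A)] \geq (1 - O(\epsilon)) \E[\rev(\M)]$. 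A union bound over the three failure events (quantile estimate, uniform convergence, representation) gives overall confidence $1 - \delta$, and the sample complexity of both phases collapses to $\tilde O(n/\epsilon^3)$ after substituting $\epsilon' = \epsilon/\log(1/\epsilon)$, as claimed.
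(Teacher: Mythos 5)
Your proposal is correct and follows essentially the same route as the paper: learn $\hat\beta$ from samples via a concentration bound for the threshold class, invoke Theorem~\ref{thm:rep-mhr} for representation error, invoke Theorem~\ref{thm:fat-sample} for generalization error over the truncated class, and convert the additive slack $\epsilon'\eta$ into a multiplicative $1-O(\epsilon)$ factor via $\E[\rev(\M)] = \Omega(\hat\beta)$. You are somewhat more explicit than the paper on two points the paper elides --- you split the sample into two halves so that $\eta$ is fixed before uniform convergence is applied, and you say a word about why the pseudo-dimension of the truncated class is still $O(nt\log nt)$ (the paper's remark that ``one can equivalently think of the values being upper-bounded by $\eta$'' is essentially shorthand for the same observation) --- but these are clarifications of the same argument rather than a different one.
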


\begin{proof}
  We first argue that one can learn some $\eta$ from the sample.  Let
  $Q(S, g) = \frac{\sum_{\v\in S_{\epsilon'}} \mathbb{I}\left[\max_i
      \v_i \geq g\right]}{|S|}$
  and $q(g) = \pr\left[\max_i \v_i \geq g\right]$ (the empirical and
  true probability that the maximum bid is at least $g$,
  respectively).  Given $\epsilon'$, consider a set of samples
  $S_{\epsilon'}$ of profiles, and compute the largest $\hat\beta$
  such that
  $Q(S_{\epsilon'}, \frac{\hat\beta}{2}) \geq 1 - \frac{1}{\sqrt{e}} -
  \eps'$.
  Standard VC-bounds imply that 
  $|q(\rho) - Q(S_{\epsilon'}, \rho)| \leq \eps'$ for all $\rho$ with
  probability at least 
  $1-\delta$, provided
  $|S_{\epsilon'}| \geq
  \Theta\left(\left(\frac{1}{\eps'}\right)^2\ln\frac{1}{\delta}\right)$.\footnote{This
    can be thought of as a class of binary classifiers
    with VC-dimension one.}.  In particular, with probability
  $1-\delta$, we will have
  $q(S_{\epsilon'}, \frac{\beta}{2}) \geq 1 - \frac{1}{\sqrt{e}}
  -\eps'$,
  so it will be the case that $\hat\beta \geq \beta$, and also
  $q(\hat\beta) \geq 1 - \frac{1}{\sqrt{e}} -2\eps'$. Then, let
  $\eta = 2\hat{\beta}\log\frac{1}{\eps'}$.

  Now, Theorem~\ref{thm:rep-mhr} implies the existence of a
  $\eta$-truncated $\t$-level auction which $(1-\eps')$-approximates
  the optimal auction. The argument is completed using the fact that, if the
  auctions' values are upper-bounded by $\eta$, one can equivalently
  think of the \emph{values} being upper-bounded by $\eta$, so
  Theorem~\ref{thm:fat-sample} implies the sample complexity bound
  allowing additive error
  $\epsilon'\eta =
  \frac{\epsilon'\hat{\beta}}{2}\log\frac{1}{\epsilon'}$.
  This error is multiplicatively at most
  $\epsilon = \epsilon'\log\frac{1}{\epsilon'}$, since
  $\rev(\M) = \Omega\left(\frac{\hat{\beta}}{2}\right)$.
\end{proof}

\begin{remark}[Near-optimality of sample complexity]
Can we do better than
Theorem~\ref{thm:fat-single}?
Can we learn an approximately optimal auction from a simpler
class --- with pseudo-dimension
$\poly(\log H, \log n, \frac{1}{\epsilon})$, say --- allowing for much
smaller sample complexity than achieved here?
The answer is negative:
lower bounds in~\citet{CR14} imply that
approximate revenue maximization requires 
sample complexityy at least linear in
the number of bidder~$n$, even when bidders' valuations
independently drawn from MHR distributions. Thus, every class
of auctions that is
sufficiently expressive to guarantee expected revenue at least
$1-\eps$ times optimal
must have pseudo-dimension that grows polynomially with $n$.
\end{remark}

\section{$\t$-Level Matroid Auctions}\label{sec:matroid}

This section extends the ideas and techniques from
Section~\ref{sec:fat-shattering} to matroid environments. The
straightforward generalization of $\t$-level auctions to matroid
environments suffices: we order the bidders by level, breaking ties
within a level by some fixed linear ordering over agents $\succ$, and
greedily choose winners according to this ordering (subject to
feasibility and to bids exceeding the lowest threshold).  The next
theorem bounds the pseudo-dimension of this more general class of
auctions.

\begin{theorem}\label{thm:fat-matroid}
  The pseudo-dimension of $\t$-level matroid auctions with $n$ bidders
  is $O(nt\log\left(nt\right))$.
\end{theorem}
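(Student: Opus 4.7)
The plan is to mirror the two-stage counting scheme from the proof of Theorem~\ref{thm:fat-single} and modify only the step that bounds the number of labelings within an equivalence class, which is where the single-winner structure was used. Fix a set $S = (\vals^1,\ldots,\vals^m)$ of samples and a candidate witness $R = (r^1,\ldots,r^m)$, and declare two $t$-level matroid auctions equivalent when their $nt$ thresholds $\ell_{i,\tau}$ induce the same permutation of $\{1,\ldots,n\}\times\{0,\ldots,t-1\}$ and the same partition of the $nm$ values $\val_i^j$ upon merging the sorted thresholds with the sorted values. Exactly as before, this definition depends only on pairwise comparisons, so the number of equivalence classes is at most $(nt)!\binom{nm+nt}{nt} \le (nm+nt)^{2nt}$.

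Within a fixed equivalence class $C$, I would argue that the level $t_i(\val_i^j)$ of every bidder in every sample is determined by the fixed pairwise comparisons. Because the greedy matroid allocation processes bidders in order of level, with ties broken by a fixed $\succ$, the winning set $W_j$ on each $\vals^j$ is fixed across $C$. The Myerson payment of a winner $i$ equals $i$'s critical bid---the smallest value at which $i$ is still added when one reruns the greedy algorithm with $i$'s level lowered while all other bidders' levels are held fixed---and this critical level $\tau_i'(j)$ is, once more, a deterministic function of those fixed levels (together with the fixed matroid and tie-breaking order). Hence the identity of the payment threshold $(i,\tau_i'(j))$ is the same for every auction in $C$, and the revenue on $\vals^j$ equals $\sum_{i\in W_j}\ell_{i,\tau_i'(j)}$, though the numerical values of these thresholds are still free to vary.

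The key new step, not needed in the single-item case, is to bound how many distinct labelings of $S$ can be produced as $\ell$ ranges over the auctions in $C$. Unlike in Theorem~\ref{thm:fat-single}, the revenue on each $\vals^j$ now depends on several thresholds simultaneously, so the nested-subsets argument no longer applies. However, within $C$ the revenue is an affine function of the $nt$-dimensional vector $\ell$, so whether the revenue meets the target $r^j$ is determined by the sign of an affine function; the labelings of $S$ generated by $C$ therefore correspond to the cells in an arrangement of $m$ hyperplanes in $\mathbb{R}^{nt}$. Sauer--Shelah applied to halfspaces (VC dimension $nt+1$) bounds the number of such cells by $O(m^{nt})$. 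Multiplying by the equivalence-class count gives a total of at most $(nm+nt)^{2nt}\cdot m^{nt} \le (nm+nt)^{3nt}$ labelings, and comparing to the $2^m$ labelings required for shattering yields $m = O(nt\log(nt))$.

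The step I expect to be the most delicate is the invariance of the payment-threshold identity $(i,\tau_i'(j))$ across $C$. This is where the matroid structure really intervenes; one needs to verify that lowering $i$'s level (with every other bidder's level held fixed) produces a well-defined critical level determined purely by pairwise comparisons together with the fixed matroid, so that both $W_j$ and the index $\tau_i'(j)$ really are constant on $C$. Once this is established, the remainder of the argument is a direct substitution of a halfspace-dichotomies bound for the nested-subsets bound used in the single-item proof.
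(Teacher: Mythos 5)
Your proposal is correct and follows essentially the same route as the paper: the same equivalence-class partition, the same observation that within a class the winning sets and the identity of each winner's payment threshold are fixed (so revenue is an affine function of the threshold vector), and the same reduction to the VC dimension of halfspaces/linear separators to bound labelings per class. The only cosmetic difference is that the paper homogenizes by embedding into $\R^{nt+1}$ and invokes the VC dimension of linear separators there, whereas you phrase it as an affine hyperplane arrangement in $\R^{nt}$; these are equivalent, and your exponent $O(m^{nt})$ versus the paper's $m^{nt+2}$ is a bookkeeping difference that does not affect the final $O(nt\log nt)$ bound.
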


The proof is conceptually similar to that of
Theorem~\ref{thm:fat-single}, though we require a more general
argument.  Our proof uses a couple of standard results from learning
theory (see e.g.~\cite{KV} for details).  The first, also known as
Sauer's Lemma, states that the number of distinct projections of a set
$S$ induced by a set system with bounded VC dimension grows only
polynomially in $|S|$.

\begin{lemma}\label{l:sauer}
Let $\C$ be a set of functions from $\dom$ to $\{0,1\}$ with VC
dimension~$d$, and $S \sse \dom$.  Then
$$
\left| \left\{ S \cap \{ x \in \dom \,:\, c(x) = 1 \} \,:\, c \in
\C\right\}\right| \le |S|^d.
$$
\end{lemma}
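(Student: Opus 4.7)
The plan is to prove Sauer's Lemma by a standard double induction on $m = |S|$ and $d = \VC(\C)$, establishing the sharper bound $|\Pi_\C(S)| \le \sum_{i=0}^{d} \binom{m}{i}$, where $\Pi_\C(S) := \{ S \cap c^{-1}(1) \,:\, c \in \C\}$ denotes the set of distinct projections. The claimed bound $|S|^d$ then follows from the elementary inequality $\sum_{i=0}^d \binom{m}{i} \le m^d$ for the parameter regime of interest (which is the regime where the lemma is applied in the sequel; the small boundary cases are trivial).

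For the base cases, if $d = 0$ then $\C$ cannot shatter even a single point, so all functions in $\C$ agree on every $x \in S$, giving $|\Pi_\C(S)| \le 1 = \binom{m}{0}$; and if $m = 0$, there is exactly one (empty) projection. For the inductive step, I would fix some $x \in S$, set $S' := S \setminus \{x\}$, and partition the analysis using the classical shifting argument. Specifically, define
\begin{align*}
\C_1 &:= \{\, c|_{S'} \,:\, c \in \C\,\}, \\
\C_2 &:= \{\, c|_{S'} \,:\, c \in \C,\ \exists c' \in \C \text{ with } c'|_{S'} = c|_{S'} \text{ and } c'(x) \neq c(x)\,\}.
\end{align*}
Each element of $\Pi_\C(S)$ corresponds to a pair (projection onto $S'$, value on $x$); the number of projections onto $S'$ is $|\C_1|$, and the ones that arise from two extensions contribute exactly $|\C_2|$ extra labelings, so $|\Pi_\C(S)| = |\C_1| + |\C_2|$.

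The key observation, which is the main technical step, is that $\C_2$ considered as a function class on $S'$ has VC dimension at most $d - 1$: if $T \subseteq S'$ were shattered by $\C_2$, then by definition of $\C_2$ every restriction pattern on $T$ admits extensions taking both values on $x$, so $T \cup \{x\}$ would be shattered by $\C$ itself, contradicting $\VC(\C) = d$. Meanwhile, $\C_1$ trivially has VC dimension at most $d$ on the $(m-1)$-point set $S'$. Applying the inductive hypothesis to both gives
\[
|\Pi_\C(S)| \le \sum_{i=0}^{d} \binom{m-1}{i} + \sum_{i=0}^{d-1} \binom{m-1}{i} = \sum_{i=0}^{d} \binom{m}{i},
\]
where the last equality is Pascal's identity $\binom{m-1}{i} + \binom{m-1}{i-1} = \binom{m}{i}$. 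The hard part of the argument is precisely the claim that passing to $\C_2$ decrements the VC dimension; once that is in hand, everything else is bookkeeping with binomial coefficients. Finally, bounding $\sum_{i=0}^d \binom{m}{i}$ by $m^d$ (valid in the regime where this lemma gets invoked, since one can encode any $i \le d$ subset of $[m]$ by a $d$-tuple from $[m]$ with repetitions) delivers the stated bound $|S|^d$.
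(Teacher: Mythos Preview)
Your proof is the standard and correct shifting/double-induction argument for the Sauer--Shelah lemma. Note, however, that the paper does not actually supply a proof of this lemma: it is quoted as a known result with a reference, so there is no ``paper's own proof'' to compare against. Your write-up is a perfectly good self-contained justification.

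One small remark worth making explicit: the form $|\Pi_\C(S)| \le |S|^d$ stated in the paper is a convenient looseness rather than a universally valid inequality (for instance, with $|S|=2$ and $d=1$ one can have three distinct projections). You already flag this by restricting to ``the parameter regime of interest,'' and indeed in the paper's application one has $|S|=m$ large and $d = nt+2 \ge 2$, where $\sum_{i=0}^d \binom{m}{i} \le m^d$ does hold. If you want a version that is literally true for all $m,d$, replace $|S|^d$ by $\sum_{i=0}^d \binom{|S|}{i}$ (or by $(e|S|/d)^d$ for $|S| \ge d$); either suffices for the downstream counting argument in Theorem~\ref{thm:fat-matroid}.
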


Recall that a {\em linear separator} in $\R^d$ is defined by
coefficients $a_1,\ldots,a_d$, and assigns $x \in \R^d$ the value~1 if
$\sum_{i=1}^d a_ix_i \ge 0$ and the value~0 otherwise.

\begin{lemma}\label{l:sep}
The set of linear separators in $\R^d$ has VC dimension $d+1$.
\end{lemma}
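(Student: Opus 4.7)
My plan is the standard two-sided bound: exhibit a shattered set of size $d+1$ for the lower bound, and use Radon's theorem for the upper bound. As stated the coefficients $a_1,\ldots,a_d$ describe a homogeneous halfspace, but the $\VC = d+1$ bound claimed requires allowing an affine constant, so I would work with separators of the form $a_0 + \sum_{i=1}^d a_i x_i \ge 0$, equivalently identifying $x \in \R^d$ with $(1, x_1, \ldots, x_d) \in \R^{d+1}$.

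For the lower bound, I would take $S = \{\mathbf{0}, e_1, \ldots, e_d\}$, the origin together with the $d$ standard basis vectors. For any target subset $T \sse S$, I need coefficients $(a_0, a_1, \ldots, a_d)$ satisfying $a_0 \ge 0$ iff $\mathbf{0} \in T$ and $a_0 + a_i \ge 0$ iff $e_i \in T$. First fix the sign of $a_0$ from the first condition; then for each $i$ independently choose $a_i$ so that $a_0 + a_i$ has the required sign. Such an $a_i$ always exists regardless of the chosen $a_0$, so every $T$ is realized by some separator and $S$ is shattered. This gives $\VC \ge d+1$.

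For the upper bound, I would invoke Radon's theorem: every set of $d+2$ points in $\R^d$ admits a partition into $A \sqcup B$ with $\mathrm{conv}(A) \cap \mathrm{conv}(B) \ne \emptyset$. Fix an arbitrary $S$ of size $d+2$, let $(A,B)$ be the Radon partition, and consider the labeling that assigns $1$ to every point of $A$ and $0$ to every point of $B$. If $S$ were shattered, some linear functional $L$ would satisfy $L(x) \ge 0$ for all $x \in A$ and $L(x) < 0$ for all $x \in B$. By linearity, $L$ would be nonnegative on all of $\mathrm{conv}(A)$ and strictly negative on all of $\mathrm{conv}(B)$, contradicting the existence of a common point in the two convex hulls. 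Hence no $(d+2)$-point set is shattered, so $\VC \le d+1$.

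The main obstacle, and the reason for the asymmetric choice of labeling in the upper bound, is the non-strict $\ge$ in the definition of a linear separator: points lying exactly on the hyperplane are labeled $1$, so the Radon argument must force the contradiction on the ``$0$'' side, where the inequality is strict. With that care taken, the two directions combine to give $\VC = d+1$ as claimed.
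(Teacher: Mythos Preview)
Your proof is correct and follows the standard textbook argument. However, there is nothing to compare it against: the paper does not prove Lemma~\ref{l:sep}. It is stated, together with Lemma~\ref{l:sauer}, as one of ``a couple of standard results from learning theory (see e.g.~\cite{KV} for details)'' and is used as a black box in the proof of Theorem~\ref{thm:fat-matroid}.

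Your observation about the homogeneous-versus-affine discrepancy is well taken: the paper's preceding definition of a linear separator uses only coefficients $a_1,\ldots,a_d$ with no constant term, for which the VC dimension is $d$ rather than $d+1$. The $d+1$ bound as stated is the one for affine halfspaces (or, equivalently, homogeneous halfspaces after lifting to $\R^{d+1}$, exactly as you describe). In the paper's application this looseness is harmless --- the lemma is used only as an upper bound in conjunction with Sauer's lemma, and the authors have in any case already appended a constant coordinate $x^\A_{nt+1}=1$ in their encoding --- but it is a genuine imprecision in the statement, and you were right to flag it before proceeding.
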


\vspace{.1in}
\noindent
\begin{prevproof}{Theorem}{thm:fat-matroid}
  Consider a set of samples $S$ of size $m$ which can be shattered by
  $\t$-level matroid auctions with revenue targets
  $(r^1, \ldots, r^m)$. We upper-bound the number of labelings of $S$
  possible using $\t$-level auctions, which again yields an upper
  bound on $m$.

  We partition auctions into equivalence classes, identically to the
  proof of Theorem~\ref{thm:fat-single}.  Recall that, across all
  auctions in an equivalence class, all comparisons between two
  thresholds or a threshold and a bid are resolved identically.
  Recall also that the number of equivalence classes is at most
  $(nm+nt)^{2nt}$.  We now upper bound the number of distinct
  labelings any fixed equivalence class $C$ of auctions can generate.

  Consider a class $C$ of equivalent auctions.  The allocation and
  payment rules are more complicated than in the single-item case but
  still relatively simple.  In particular, whether or not a bidder
  wins depends only on the ordering of the bidders (by level) and the
  fixed tie-breaking rule $\succ$, and thus is a function only of
  comparisons between bids and thresholds.  This implies that, for
  every sample in $S$, all auctions in the class $C$ declare the same
  set of winners.  It also implies that the payment of each winning
  bidder is a fixed threshold $\lev{\winner}{\tau}$, and the identity
  of this parameter is the same across all auctions in $C$.

  Now, encode each auction $\A\in C$ and sample $\v^j$ as an
  $nt+1$-dimensional vector as follows.  Let $x^\A_{i, \tau}$ equal
  the value of $\lev{i}{\tau}$ in the auction $\A$.  Define
  $x^\A_{nt+1} = 1$ for every $\A \in C$. Define $y^j_{i, \tau} = 1$
  if bidder $i$ is a winner paying her threshold $\lev{i}{\tau}$ for
  auctions in $C$ and $0$ otherwise. Finally, define
  $y^j_{nt+1} = -r^j $.  The point is that, for every auction $\A$ in
  the class $C$ and sample $\v^j$,
  \[x^\A \cdot y^j \geq 0\]
  if and only if $\rev(\A) \geq r^j$.  
Thus, the number of distinct labelings of the samples generated by
auctions in $C$ is bounded above by the number of distinct sign
patterns on $m$ points in $\R^{nt+1}$ generated by all linear
separators.  (The $y^j$-vectors are constant across $C$ and can be
viewed as $m$ fixed points in $\R^{nt+1}$; each auction $\A \in C$
corresponds to the vector $x^{\A}$ of coefficients.)
Applying Lemmas~\ref{l:sauer} and~\ref{l:sep}, 
$\t$-level matroid auctions can generate at most $m^{nt+2}$
  labelings per equivalence class, and hence at most
  $(nm + nt)^{3nt + 2}$ distinct labelings in total.
This imposes the restriction
\[2^m \leq (nm + nt)^{3nt + 2};\]
solving for $m$ yields the desired bound.
\end{prevproof}

We now extend our representation error bound for $\t$-level
single-item auctions to matroids.

\begin{theorem}\label{thm:rep-matroid}
Consider an arbitrary matroid environment.
Suppose $F$ is a production distribution with valuations in~$[1,H]$.
  Provided
  $\t = \Omega\left(\frac{1}{\epsilon} + \log_{1+\epsilon} H\right)$,
  there exists a $\t $-level matroid auction with expected
  revenue at least a $1-\epsilon$ fraction of the optimal expected
  revenue.
\end{theorem}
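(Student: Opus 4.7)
The plan is to follow the template of Lemma~\ref{lem:levels} with $\alpha=\gamma=1$, reusing its threshold construction but replacing the single-winner analysis with a summation over the (potentially many) winners of matroid greedy. Concretely, I would set $\lev{i}{0} = \vvi{i}{0}$, $\lev{i}{\ind} = \vvi{i}{\ind\eps'}$ for $\ind \in [1, \lceil 1/\eps'\rceil]$, and $\lev{i}{\ind} = \vvi{i}{(1+\eps/2)^{\ind-\lceil 1/\eps'\rceil}}$ for larger $\ind$, with $\eps'=\eps/2$ to be chosen at the end. The key structural property is that these thresholds are \emph{grid-aligned in virtual-value space}: level $\ind$ corresponds to the same anchor virtual value $g_\ind$ for every bidder~$i$.

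Let $\winners$ and $\mwinners$ denote the winning sets of the resulting $\t$-level matroid auction $\A$ and of Myerson's auction $\M$ on a fixed profile $\v$, and define $\widehat\phi_i(v_i) = g_{\ind_i(v_i)}$ when $\ind_i(v_i) \geq 0$ and $-\infty$ otherwise. Because all bidders at a common level share the same $\widehat\phi$-value, the lexicographic tie-break $\succ$ is matroid-greedy-optimal for $\widehat\phi$, so $\A$'s output is a maximum-weight independent set in $\X$ under $\widehat\phi$ (restricted to bidders at level at least~$0$). Matroid-greedy optimality therefore yields the pointwise inequality $\sum_{i \in \winners} \widehat\phi_i(v_i) \geq \sum_{i \in \mwinners} \widehat\phi_i(v_i)$.

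By construction, for every bidder $i$ with $\vv{i}{v_i} \geq 0$ one has $\widehat\phi_i(v_i) \geq (1-\eps/2)\vv{i}{v_i} - \eps'$: when $\vv{i}{v_i}\leq 1$ the additive rounding loss is at most $\eps'$, and when $\vv{i}{v_i} > 1$ the multiplicative rounding loss is at most $(\eps/2)\vv{i}{v_i}$. Summing over $i \in \mwinners$ and chaining with the previous inequality gives, pointwise in $\v$,
\[\sum_{i \in \winners} \vv{i}{v_i} \;\geq\; \sum_{i \in \winners} \widehat\phi_i(v_i) \;\geq\; \sum_{i \in \mwinners} \widehat\phi_i(v_i) \;\geq\; (1-\eps/2)\sum_{i\in\mwinners} \vv{i}{v_i} \;-\; \eps'|\mwinners|.\]

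To absorb the additive $\eps'|\mwinners|$ term without a factor of~$n$, I would compare $\M$ against the trivial truthful auction that charges every winner a posted price of~$1$ and allocates an arbitrary matroid basis among the (necessarily all) bidders accepting: since every $v_i \geq 1$ this collects revenue equal to the matroid rank $r$ on every sample, so $\E[\rev(\M)] \geq r \geq \E[|\mwinners|]$. Taking expectations in the displayed inequality and using Myerson's revenue-equivalence identities $\E\!\left[\sum_{i\in\winners}\vv{i}{v_i}\right] = \E[\rev(\A)]$ and $\E\!\left[\sum_{i\in\mwinners}\vv{i}{v_i}\right] = \E[\rev(\M)]$, I obtain $\E[\rev(\A)] \geq (1-\eps/2-\eps')\E[\rev(\M)]$, and setting $\eps'=\eps/2$ closes the argument with the claimed number of levels. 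The main obstacle, compared to the single-item case, is precisely this multi-winner step: naive summation of per-bidder rounding errors introduces an $|\mwinners|$ factor that could a priori scale with~$n$; the trick that removes this $n$-dependence is the bound $\E[\rev(\M)] \geq \E[|\mwinners|]$, which comes essentially for free from the assumption that valuations are bounded below by~$1$.
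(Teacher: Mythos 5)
Your proposal is correct, and at the top level it matches the paper's strategy: reuse the threshold construction of Theorem~\ref{thm:rep-bounded} (grid-aligned in virtual-value space), fix a profile, and compare the virtual surplus of $\A$'s winners to $\M$'s winners. Where you diverge is in how the comparison is carried out. The paper invokes Proposition~\ref{prop:matroid} to argue that $\winners$ and $\mwinners$ have equal size and that, after sorting each by level, the $i$-th element of $\winners$ has the same level as the $i$-th element of $\mwinners$; it then compares virtual values pair-by-pair as in the single-item case. You instead observe directly that, because thresholds share common anchor values $g_\ind$, the auction $\A$ is precisely the greedy max-weight algorithm under the discretized virtual values $\widehat\phi$, so $\sum_{i\in\winners}\widehat\phi_i(v_i) \geq \sum_{i\in\mwinners}\widehat\phi_i(v_i)$ holds in one stroke, and then you chain per-bidder rounding bounds. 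This is a genuinely cleaner route: it avoids the cardinality-matching and sorted-pairing bookkeeping, and it sidesteps the slightly delicate step of transferring lex-optimality from virtual values to levels. You also make explicit a step the paper compresses into the phrase ``summing up over all bidders completes the proof'': the additive rounding loss of $\eps'|\mwinners|$ must be absorbed multiplicatively, and your bound $\E[\rev(\M)] \geq r \geq \E[|\mwinners|]$ (available because every valuation is at least~$1$, so posting price~$1$ to a fixed basis is a feasible truthful auction with revenue $r$) is exactly the matroid analogue of the $\E[\rev(\M)] \geq \alpha\gamma$ bound used in Lemma~\ref{lem:levels}. In short: same construction and same endgame, but a more direct exchange argument and a more explicit accounting of the additive error.
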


The key new idea in the proof is to exhibit a bijection between the
feasible sets $\winners$ (our winning set) and $\mwinners$ (the
optimal winning set) such that each bidder from $\winners$ has a level
at least as high as their bijective partner in $\mwinners$.  To
implement this, we use the following property of matroids
(e.g.~\cite{hartline2009simple,RS15}).

\newcommand{\opt}{\textsc{Opt}}
\begin{prop}\label{prop:matroid}
  Let $\opt$ denote the largest-weight set of a matroid, and let $B$
  be any other feasible set such that $|B| = |\opt|$, and
  $\opt_i, B_i$ denote the $i$-th largest element of $\opt$ and $B$,
  respectively. Then $w(\opt_i) \geq w(B_i)$ for all $i$.
\end{prop}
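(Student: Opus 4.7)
The plan is to prove this purely matroid-theoretic fact by contradiction, using the matroid augmentation axiom (property (ii) in the paper's definition of a matroid) as the sole non-trivial tool. Write $k=|\opt|=|B|$ and suppose for contradiction that there is an index $i$ with $w(B_i)>w(\opt_i)$; I will construct a feasible set $B^*$ with $w(B^*)>w(\opt)$, contradicting the definition of $\opt$.

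First I would set $Y=\{\opt_1,\dots,\opt_{i-1}\}$ and $X=\{B_1,\dots,B_i\}$. Both are independent because they are subsets of independent (indeed, maximum) sets. Since $|X|=i>i-1=|Y|$, the augmentation axiom gives some $b\in X\setminus Y$ with $Y\cup\{b\}\in\X$. Because $b\in X$, by definition of the sort we have $w(b)\ge w(B_i)>w(\opt_i)$.

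The next step is a brief sanity check that rules out the degenerate case $b\in\opt$: if $b=\opt_j$, then $b\notin Y$ forces $j\ge i$, but then $w(b)=w(\opt_j)\le w(\opt_i)$ would contradict $w(b)>w(\opt_i)$. Hence $b\notin\opt$, and consequently $\opt\setminus(Y\cup\{b\})=\opt\setminus Y=\{\opt_i,\opt_{i+1},\dots,\opt_k\}$, a set of size $k-i+1$. Now I would iteratively apply the augmentation axiom to $Y\cup\{b\}$ (size $i$) and $\opt$ (size $k$): as long as the current independent set has size less than $k$, the axiom supplies a new element from $\opt$ that preserves independence. Repeating this $k-i$ times yields a basis $B^*$ of size $k$ with $Y\cup\{b\}\subseteq B^*\subseteq (Y\cup\{b\})\cup\opt$. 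Since every added element comes from $\opt\setminus Y$ and exactly $k-i$ of the $k-i+1$ candidates are used, exactly one $\opt_{j^*}$ with $j^*\in\{i,i+1,\dots,k\}$ is omitted, so $B^*=(\opt\setminus\{\opt_{j^*}\})\cup\{b\}$.

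The contradiction is then immediate:
\[
w(B^*)-w(\opt)=w(b)-w(\opt_{j^*})\ge w(b)-w(\opt_i)>0,
\]
where the first inequality uses $j^*\ge i$ together with the decreasing sort of $\opt$, and the second is our original comparison. This violates the assumption that $\opt$ has the largest weight, completing the proof. The main obstacle is conceptual rather than technical: pinning down exactly which $k-i$ elements of $\opt\setminus Y$ populate $B^*$, and in particular verifying that $b\notin\opt$ so that the augmentation has no choice but to draw from $\{\opt_i,\dots,\opt_k\}$; the weight accounting in the final step is then a one-line calculation.
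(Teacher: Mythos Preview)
Your argument is correct: the contradiction via the augmentation axiom is clean, the check that $b\notin\opt$ is the right sanity step, and the repeated augmentation against $\opt$ does yield a basis of the form $(\opt\setminus\{\opt_{j^*}\})\cup\{b\}$ with $j^*\ge i$, from which the weight comparison is immediate.

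There is nothing to compare against in the paper itself: Proposition~\ref{prop:matroid} is stated without proof and attributed to prior work (the references to \cite{hartline2009simple,RS15}). So your proof is not a different route from the paper's---it simply supplies an argument the paper omits. For what it is worth, the standard textbook proof of this fact is essentially the one you wrote (sometimes phrased as ``the greedy algorithm is optimal on matroids,'' with the exchange argument you gave as the core step).
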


\vspace{.1in}
\noindent
\begin{prevproof}{Theorem}{thm:rep-matroid}
  Define bidders' thresholds exactly as in the proof of
  Theorem~\ref{thm:rep-bounded} and let $\A$ denote the corresponding
  $\t$-level auction.  Fix an arbitrary valuation profile $\v$.  Let
  $\winners$ denote the set of winning bidders in $\A$ and $\mwinners$
  the set of winning bidders in $\M$.  Recall that the latter is the
  feasible set that maximizes the sum of virtual valuations.  Both
  sets are maximally independent amongst those bidders with
  non-negative virtual value ($\M$, by virtual of being
  welfare-maximal, and $\A$, by definition). Then, we claim
  $|\winners| = |\mwinners|$ (if not, by the augmentation property of
  matroids, the smaller set could be extended to include an element of
  the larger while maintaining independence, violating their
  maximality).

  Notice that $\winners$ is lexicographically optimal with respect to
  the \emph{levels}, rather than the exact
  weights. Proposition~\ref{prop:matroid} implies that $\mwinners$ is
  also lexicographically optimal with respect to the levels; thus, the
  level of the $i$th largest bidder in $\mwinners$ has the same level
  as the $i$th largest bidder in $\winners$.  Then, by an accounting
  argument identical to the one for the single-item case (comparing
  virtual values for the $i$th bidder in $\winners$ to the $i$th
  bidder in $\mwinners$) summing up over all bidders completes the
  proof.
\end{prevproof}

Thus, we have the following corollary about the sample complexity of
$1-\epsilon$-approximating Myerson in matroid environments with
$\t$-level auctions, noting that the maximum revenue is now $nH$
rather than $H$.

\begin{corollary}\label{cor:matroid-sample}
  With probability $1-\delta$, the empirical revenue maximizer for a
  sample of size $m$ of the class of $\t $-level single-item auctions
  is a $1-O\left(\epsilon\right)$-approximation to Myerson for $n$
  bidders whose valuations are in $[1,H]$, for $\t = O\left(\bounded\right)$ and
  \[m = O \left(\left(\frac{Hn}{\epsilon}\right)^2\left(n\t\log(n\t)\ln\frac{Hn}{\epsilon} + \ln\frac{1}{\delta}\right)\right) = \tilde{O}\left(\frac{H^2n^3}{\epsilon^3}\right).\]
\end{corollary}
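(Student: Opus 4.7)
The plan is to combine our representation and pseudo-dimension bounds for $\t$-level matroid auctions with the generic sample-complexity bound of Theorem~\ref{thm:fat-sample}. First I would invoke Theorem~\ref{thm:rep-matroid}: for $\t = \Theta(\bounded)$, it guarantees the existence of some $\A^{*} \in \Ct$ with $\E[\rev(\A^{*})] \geq (1-\epsilon)\E[\rev(\M)]$. Next, I would apply Theorem~\ref{thm:fat-matroid} to bound the pseudo-dimension of this class by $O(nt\log(nt))$.

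The main (minor) twist compared with Corollary~\ref{cor:sample-single} is that a $\t$-level matroid auction can extract revenue from several winners simultaneously, so the range of the revenue function is $[0, nH]$ rather than $[0, H]$. Consequently, I would invoke Theorem~\ref{thm:fat-sample} with $H$ replaced by $nH$ and additive-error parameter $\epsilon/2$, concluding that with probability at least $1-\delta$, $m = O((nH/\epsilon)^{2}(nt\log(nt)\log(nH/\epsilon)+\log(1/\delta)))$ samples suffice for the empirical revenue maximizer $\hat{\A}$ to satisfy $\E[\rev(\hat{\A})] \geq \E[\rev(\A^{*})] - \epsilon$.

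Finally, since valuations lie in $[1, H]$, Myerson's auction can always sell to at least one bidder at price~$1$, so $\E[\rev(\M)] \geq 1$. Chaining the two inequalities yields
\[
\E[\rev(\hat{\A})] \;\geq\; (1-\epsilon)\E[\rev(\M)] - \epsilon \;\geq\; (1-2\epsilon)\E[\rev(\M)],
\]
which is the claimed $1-O(\epsilon)$ approximation; substituting $\t = \Theta(\bounded)$ into the sample bound and simplifying polylogarithmic factors recovers the stated $\tilde{O}(H^{2}n^{3}/\epsilon^{3})$. I do not anticipate any genuine obstacle here, since all of the heavy lifting is done by Theorems~\ref{thm:fat-matroid} and~\ref{thm:rep-matroid}; the only subtlety is tracking the correct $[0, nH]$ range of the revenue function when applying the generic pseudo-dimension learning bound, which is exactly what produces the extra factor of $n^{2}$ relative to the single-item corollary.
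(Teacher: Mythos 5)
Your proposal is correct and follows the same route the paper takes (the paper presents the corollary without a separate proof, appealing to the same three ingredients): combine the representation bound of Theorem~\ref{thm:rep-matroid}, the pseudo-dimension bound of Theorem~\ref{thm:fat-matroid}, and the generic uniform-convergence bound of Theorem~\ref{thm:fat-sample} with the revenue range enlarged from $[0,H]$ to $[0,nH]$, which is precisely the observation the paper flags when introducing the corollary. Your final step using $\E[\rev(\M)] \ge 1$ (valuations lie in $[1,H]$) to convert the additive learning error into the stated multiplicative guarantee is also the right way to close the argument.
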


 \section{Single-parameter $\t $-level
  auctions}\label{sec:single-param}

In this section, we show how to extend the ideas and techniques from
Section~\ref{sec:fat-shattering} to any single-parameter environment
which has the empty set as a feasible outcome. With this mild
assumption, the results in this section do not require the environment
to be a matroid or even downwards-closed. Before we state this result,
we need a slight generalization of the $t$-level auction to this
setting. Previously, no $\t$-level auction would allocate to any
bidder whose value was below their lowest threshold, and this will not
be a possibility in environments which are not downwards-closed.
Instead, in this setting, if \emph{any} bidder fails to pass her
lowest threshold, we will assume $\A$ will choose the empty
outcome. Moreover, a $\t$-level auction will now need more about what
the various levels represent: 
previously, we implicitly used the $k$th
threshold to correspond to a value where each bidder's virtual value
would pass some quantity $q_k$.

In this general setting, we make that connection explicit. There
will still be $nt$ numbers which define a particular $t$-level
auction, the $t$ threshold locations per bidder. In addition, we will
consider a fixed vector $\Phi\in \mathbb{R}^t$ (\emph{not}
parameterizing the auction class) which, for all $\ind$, intuitively
assigns an estimate of $\vvi{i}{\lev{i}{\ind}}$, which is the same for
all bidders $i$.  Formally, $\Phi_\ind$ will be used to assign a real
value to a feasible set $X\in \X$ with a valuation profile $\v$ as
follows.  Let $e_X =\sum_{i\in X} \Phi_{t_i(v_i)} $, where $t_i(v_i)$
as before is the level agent $i$'s bid according to $\v_i$. Then, a
particular $\t$-level auction will choose the winning set $X\in \X$
which maximizes $e_X$ (breaking ties in some fixed way which does not
depend upon the bids). If, for each bidder $i$ and level $\tau$, the
threshold $\lev{i}{\tau}$ is placed exactly at the value at which
$i$'s virtual valuation surpasses $\Phi_\tau$, then this auction is
approximately optimizing the virtual surplus of the winning set.

\begin{theorem}\label{thm:fat-param} The pseudo-dimension of
  $\t$-level single-parameter auctions with $n$ bidders is
  $O\left(n\t\log(n\t)\right)$.
\end{theorem}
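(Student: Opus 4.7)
The plan is to mirror the structure of the matroid proof (Theorem~\ref{thm:fat-matroid}) as closely as possible, since the single-parameter setting is a direct generalization. I would first partition $t$-level auctions into equivalence classes exactly as in Theorem~\ref{thm:fat-single}: two auctions are equivalent if their thresholds induce the same permutation of $\{1,\ldots,n\} \times \{0,\ldots,t-1\}$ and their sorted threshold list interleaves identically with the $nm$ sample values $\val_i^j$. The same counting argument gives at most $(nm+nt)^{2nt}$ equivalence classes, independent of the feasibility structure $\X$.

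Fixing an equivalence class $C$, the main claim I need is that for every sample $\v^j$, both the winning set $\winners^j \subseteq \{1,\ldots,n\}$ and, for each winner $i$, the level index $\tau_i^j$ such that $i$'s payment equals $\lev{i}{\tau_i^j}$, are functions of $C$ and $\v^j$ only. The allocation half is the easy part: each bidder's level $t_i(v_i^j)$ is fixed across $C$, so the score $e_X = \sum_{i \in X} \Phi_{t_i(v_i^j)}$ of each feasible set $X \in \X$ is fixed (crucially, $\Phi$ is a global constant, not a parameter of the class), and with a fixed tie-breaking rule the $e_X$-maximizer is fixed as well. Whether the ``empty outcome'' exception triggers (some bidder falling below her lowest threshold) is similarly determined by $C$ and $\v^j$.

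The payment half is where I expect the main obstacle, because I cannot appeal to the greedy characterization used in the matroid case. I would argue directly from the threshold payment rule: the critical bid of a winner $i$ is the infimum bid at which she would still be allocated. As her bid decreases, her level is piecewise constant and drops only when her bid crosses some $\lev{i}{\tau}$; between crossings all bidders' levels are fixed, so the $e_X$-maximizer and the empty-outcome test remain fixed. Hence the critical bid is exactly some threshold $\lev{i}{\tau_i^j}$, and the decision at each intermediate level depends only on comparisons among $\Phi$-scores of feasible sets, all of which are invariant across $C$. So $\tau_i^j$ is a function of $C$ and $\v^j$ alone.

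With this claim in hand, the labeling count inside $C$ proceeds exactly as in Theorem~\ref{thm:fat-matroid}. I would encode each $\A \in C$ as $x^{\A} \in \R^{nt+1}$ with $x^{\A}_{i,\tau} = \lev{i}{\tau}$ and $x^{\A}_{nt+1} = 1$, and each sample $\v^j$ as $y^j \in \R^{nt+1}$ with $y^j_{i,\tau} = \mathbf{1}[\,i \in \winners^j \text{ and } \tau = \tau_i^j\,]$ and $y^j_{nt+1} = -r^j$, so that $x^{\A} \cdot y^j \geq 0$ iff $\rev(\A; \v^j) \geq r^j$. Lemmas~\ref{l:sauer} and~\ref{l:sep} then bound the number of labelings per equivalence class by $m^{nt+2}$, giving a total of at most $(nm+nt)^{3nt+2}$ labelings. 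Shattering requires $2^m \le (nm+nt)^{3nt+2}$, which solves to $m = O(nt\log nt)$.
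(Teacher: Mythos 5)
Your proposal is correct and follows essentially the same route as the paper: partition into the same equivalence classes, observe that within a class both the winning set and each winner's payment-threshold index are fixed per sample, and then invoke the linear-separator encoding with Sauer's Lemma exactly as in Theorem~\ref{thm:fat-matroid}. The one place you diverge is cosmetic: you derive the ``each winner pays one of her thresholds, with a class-invariant index'' fact from a critical-bid argument, whereas the paper reads it directly off the explicit payment rule it states (the minimal $\ind$ with $\sum_{i' \in X, i' \neq i} \Phi_{t_{i'}(v_{i'})} + \Phi_\ind \geq \sum_{i'\in Y} \Phi_{t_{i'}(v_{i'})}$); both reach the same conclusion, and the remainder is identical.
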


These auctions have slightly more complicated payment rules than those
for matroids, where each agent $i$ was intuitively competing with (at
most) one other bidder for inclusion in the winning set. Now, a bidder
$i$ who is in the winning set $X$ will have a payment of the following
form. For a fixed assignment of levels to bidders, sort the
alternatives according to their values $e_Y$ for all $Y \in \X$. Let
$Y$ be the highest-ranked alternative set which does not contain
$i$. Then, $i$'s payment will be the threshold corresponding to the
minimal $\ind$ such that
$\sum_{i' \in X, i' \neq i} \Phi_{t_{i'}(v_{i'})} + \Phi_\ind \geq
\sum_{i'\in Y} \Phi_{t_{i'}(v_{i'})}$
(namely, the minimal bid which keeps $X$ preferred to $Y$ in terms of
the estimated virtual values)\footnote{Since we assume ties are broken
  in a way which does not depend on the bids, we can ignore ties in
  the payment rule, and agents will only ever pay thresholds.}. While
this rule is more complicated, it is \emph{still} the case that, once
each bidder is assigned to some level, each of the bidders in the
winning set's payment is just one of their thresholds. Thus, the proof
of Theorem~\ref{thm:fat-param} is identical to the one of
Theorem~\ref{thm:fat-matroid} without ties.

When considering non-downwards-closed environments, the optimal
revenue may be arbitrarily close to $0$, making it difficult to argue
about multiplicative approximations to the optimal revenue. Instead, we
will give a weaker guarantee, namely, that the empirical revenue
maximizer will have expected revenue which is \emph{additively} close
to the optimal expected revenue. If one is willing to restrict the
environment to be downwards-closed, it is possible to achieve a
multiplicative guarantee, since in that case the optimal revenue is at
least $1$.  We now state the Theorem which bounds the representation
error of $\t$-level auctions for single-parameter environments.
\begin{theorem}\label{thm:rep-param}
  There is a $\t$-level auction whose expected revenue is within an
  additive $\epsilon$ of optimal, in any single-parameter setting $\X$
  such that $\emptyset\in\X$, for
  $\t = O\left(\frac{Hn^2}{\epsilon}\right)$, for $n$ bidders with
  valuations distributions which are product and bounded in $[0,H]$.
\end{theorem}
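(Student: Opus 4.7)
The plan is to discretize the space of (ironed) virtual values so that the generalized $\t$-level auction parametrized by a common vector $\Phi \in \mathbb{R}^\t$ closely approximates Myerson's virtual welfare maximizer $\M$. Since values lie in $[0,H]$, each $\vv{i}{v_i}$ is at most $H$, but may be arbitrarily negative at low bids. I would set $\delta = \eps/n$, choose $\Phi_0 = -nH$, and put $\Phi_\ind = -(n-1)H + (\ind-1)\delta$ for $\ind \geq 1$, with $\t$ just large enough that $\Phi_{\t-1} \geq H$; this uses $\t = 2 + nH/\delta = O(Hn^2/\eps)$ levels. For each bidder $i$, take $\lev{i}{0} = 0$ (so every bid passes the lowest threshold and $\A$ never trivially returns $\emptyset$) and, for $\ind \geq 1$, let $\lev{i}{\ind}$ be the smallest $v$ with $\vv{i}{v} \geq \Phi_\ind$, or $+\infty$ if no such $v$ exists. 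By construction every bidder at level $\ind \geq 1$ satisfies $\vv{i}{v_i} \in [\Phi_\ind, \Phi_\ind + \delta)$, and every bidder (including those at level $0$) satisfies $\Phi_{t_i(v_i)} \leq \vv{i}{v_i} \leq H$.

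Let $\winners$ and $\mwinners$ denote the winning sets chosen by $\A$ and $\M$ on a valuation profile $\v$. Two structural claims drive the comparison. (i) Every bidder in $\mwinners$ has $\vv{i}{v_i} \geq -(n-1)H$: otherwise a single highly-negative term drags virtual welfare below $-(n-1)H + (n-1)H = 0$, contradicting $\M$'s preference for $\mwinners$ over $\emptyset \in \X$. (ii) Every bidder in $\winners$ lies at level $\ind \geq 1$: otherwise $\sum_{j \in \winners} \Phi_{t_j(v_j)} \leq \Phi_0 + (n-1)H = -nH + (n-1)H < 0$, contradicting the $\Phi$-optimality of $\winners$ against $\emptyset$. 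Consequently $\Phi_{t_i(v_i)} \in (\vv{i}{v_i} - \delta,\, \vv{i}{v_i}\,]$ for every bidder in $\winners \cup \mwinners$, and the $\Phi$-optimality of $\winners$ yields
\[
\sum_{i \in \winners} \vv{i}{v_i} \;\geq\; \sum_{i \in \winners} \Phi_{t_i(v_i)} \;\geq\; \sum_{i \in \mwinners} \Phi_{t_i(v_i)} \;>\; \sum_{i \in \mwinners} \vv{i}{v_i} - n\delta \;=\; \sum_{i \in \mwinners} \vv{i}{v_i} - \eps.
\]
Taking expectations and invoking the Myerson identity that expected revenue equals expected (ironed) virtual welfare for any truthful mechanism (see~\cite{hartline2015}) gives $\E[\rev(\A)] \geq \E[\rev(\M)] - \eps$. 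The corner cases $\winners = \emptyset$ or $\mwinners = \emptyset$ follow by substituting $0$ for the corresponding side of the same chain.

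The main obstacle is that the ironed virtual-value functions are unbounded below, so a naive uniform discretization would need infinitely many thresholds. The workaround is the level-$0$ sentinel $\Phi_0 = -nH$: placing it far below the finely discretized range $[-(n-1)H, H]$ forces any set containing a level-$0$ bidder to have strictly negative $\Phi$-weight, so $\A$ automatically discards those bidders; combined with the complementary fact that $\M$ also avoids them (else its virtual welfare slips below $0$ and $\emptyset$ dominates), both sides of the comparison stay inside the finely discretized range. Minor additional bookkeeping handles ironed virtual valuations for irregular distributions (as in Theorem~\ref{thm:rep-bounded}, at negligible cost in revenue), the fixed tie-breaking rule for the $\argmax$ over $\X$, and the convention $\lev{i}{\ind} = +\infty$ when $\vir{i}$ never reaches $\Phi_\ind$.
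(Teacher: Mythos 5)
Your proof is correct and follows the same overall strategy as the paper's: fix a common grid $\Phi$ of virtual-value estimates with spacing $\eps/n$ and a floor near $-nH$, set each $\lev{i}{\ind}$ as the smallest value at which $\vir{i}$ reaches $\Phi_\ind$, and then chain $\sum_{\winners}\vir{\winner} \geq \sum_{\winners}\Phi_{t_\winner} \geq \sum_{\mwinners}\Phi_{t_\mwinner} \geq \sum_{\mwinners}\vir{\mwinner} - \eps$, concluding by Myerson's identity; the count $\t = O(Hn^2/\eps)$ matches as well. The one place you deviate is the treatment of bidders with very negative virtual value, and it is a genuine tightening: the paper's convention is that $\A$ outputs $\emptyset$ whenever \emph{any} bidder falls below her lowest threshold, and its proof asserts in a parenthetical that $\M$ would then also pick $\emptyset$ --- but $\M$ could instead pick a nonempty feasible set that simply omits the offending bidder, so that parenthetical does not quite hold as stated. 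Your choice $\lev{i}{0}=0$ with the sentinel $\Phi_0 = -nH$ avoids this entirely: $\A$ never trivially bails out, and your structural claims (i) and (ii) show directly that neither $\winners$ nor $\mwinners$ ever contains a bidder whose virtual value (or $\Phi$-estimate) falls below $-(n-1)H$, which is exactly what is needed for the first and last inequalities of the chain to apply uniformly. This is a cleaner way to land the same argument.
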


\begin{proof}
  Let $\t = \frac{Hn^2}{\epsilon} + \frac{Hn}{\epsilon}$.  We will
  begin by defining $\Phi$, the $\t$-dimensional vector corresponding
  to the estimated virtual values.  Let $\Phi_0 = -Hn$ (if any bidder
  has virtual value $<-Hn$, the virtual value of any set containing
  her is negative, since virtual values are upper-bounded by values
  and the value of the remaining set may be at most $H(n-1)$, so in
  this case one should allocate to $\emptyset$). Then, let
  $\Phi_{\ind} = \Phi_{\ind - 1} + \frac{\epsilon}{n}$. Thus, we
  partition the space of virtual values into additive sections of
  width $\frac{\epsilon}{n}$.

  Then, for each bidder $i$ and $\ind$, let
  $\lev{i}{\ind} = \vvi{i}{\Phi_\ind}$, the value at which bidder
  $i$'s virtual value surpasses $\Phi_\ind$. Then, consider a
  valuation profile $\v$ on which $\M$ and this particular $\A$
  disagree on the winning sets $\winners, \mwinners \in \X$. Notice
  that each bidder $i$'s virtual value is estimated correctly within
  an additive $\frac{\epsilon}{n}$ by $\Phi_{t_i(v_i)}$ (assuming no
  bidder has highly negative virtual value, in which case $\M$ and
  $\A$ both choose outcome $\emptyset$), and are never
  overestimated. Thus, it is the case that

\[\sum_{\winner\in\winners} \vv{\winner}{\v_\winner} \geq \sum_{\winner\in \winners}\Phi_{t_\winner(v_\winner)} \geq \sum_{\mwinner\in \mwinners}\Phi_{ t_\mwinner(v_\mwinner)} \geq \sum_{\mwinner\in\mwinners} \vv{\mwinner}{\v_\mwinner}  - \epsilon\]

and the claim follows.
\end{proof}

Thus, we have the following sample complexity result for general
single-parameter settings.
\begin{corollary}\label{cor:sample-param}
  With probability $1-\delta$, the empirical revenue maximizer on $m$
  samples $S$ from the class of $\t$-level auctions has true expected
  revenue within an additive $\epsilon$ of Myerson's expected revenue,
  for the single-parameter environment $\X$ when bidders have
  valuations in $[0,H]$, for $\t = O\left(\frac{Hn^2}{\epsilon}\right)$, and
  \[ m = O\left(\left(\frac{Hn}{\epsilon}\right)^2\left(n\t \log(n\t)\log\frac{Hn}{\epsilon} + \log\frac{1}{\delta}\right)\right) = \tilde O\left( \frac{H^3 n^5}{\epsilon^3}\right).\]
\end{corollary}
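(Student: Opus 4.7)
The plan is to obtain Corollary~\ref{cor:sample-param} by composing three results already in hand: the representation bound of Theorem~\ref{thm:rep-param}, the pseudo-dimension bound of Theorem~\ref{thm:fat-param}, and the uniform convergence guarantee of Theorem~\ref{thm:fat-sample}. The structure mirrors the proofs of Corollaries~\ref{cor:sample-single} and~\ref{cor:matroid-sample}, with two small adjustments appropriate to general single-parameter settings: we argue about additive (rather than multiplicative) error, and we use the crude range bound $[0, nH]$ on the revenue.

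First, I would set $t = \Theta(Hn^2/\epsilon)$ as dictated by Theorem~\ref{thm:rep-param}, so that there exists some $\A^{*} \in \Ct$ whose expected revenue satisfies $\E[\rev(\A^{*})] \geq \E[\rev(\M)] - \epsilon/2$. (I split the additive error budget into a representation half and a learning half; the constants are absorbed into the $O(\cdot)$.) By Theorem~\ref{thm:fat-param}, this class has pseudo-dimension $\pd_{\Ct} = O(nt\log(nt))$. Since each bidder's valuation lies in $[0,H]$ and at most $n$ bidders can win, every auction in $\Ct$ yields revenue in $[0,nH]$ on every valuation profile; this is the range $H$ that enters Theorem~\ref{thm:fat-sample}.

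Next, I would invoke Theorem~\ref{thm:fat-sample} with range $nH$ and target additive error $\epsilon/2$. This tells us that, with probability at least $1-\delta$, a sample of size
\[
m = O\!\left(\left(\tfrac{Hn}{\epsilon}\right)^{2}\!\left(nt\log(nt)\log\tfrac{Hn}{\epsilon}+\log\tfrac{1}{\delta}\right)\right)
\]
suffices to guarantee $|\widehat{\rev}_S(\A) - \E[\rev(\A)]| \leq \epsilon/2$ simultaneously for every $\A\in\Ct$, where $\widehat{\rev}_S$ denotes the empirical revenue on $S$. Substituting $t = \Theta(Hn^{2}/\epsilon)$ into this expression gives the advertised $\tilde O(H^{3}n^{5}/\epsilon^{3})$.

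Finally, a standard ERM argument closes the loop. Let $\hat\A$ be the empirical revenue maximizer over $\Ct$ on $S$. Uniform convergence gives $\E[\rev(\hat\A)] \geq \widehat{\rev}_S(\hat\A) - \epsilon/2 \geq \widehat{\rev}_S(\A^{*}) - \epsilon/2 \geq \E[\rev(\A^{*})] - \epsilon$, where the middle inequality uses the optimality of $\hat\A$ on $S$. Combining with the representation guarantee, $\E[\rev(\hat\A)] \geq \E[\rev(\M)] - 2\epsilon$, and rescaling $\epsilon$ by a constant factor yields the claim. There is no real obstacle here beyond bookkeeping; the only nonroutine choice is recognizing that the appropriate range bound for Theorem~\ref{thm:fat-sample} is $nH$ rather than $H$ (which is why the bound carries the extra factor of $n^{2}$ compared to Corollary~\ref{cor:sample-single}), and that additive error is the right measure since $\emptyset\in\X$ permits optimal revenue arbitrarily close to zero.
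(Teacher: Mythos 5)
Your proposal is correct and follows essentially the same composition the paper (implicitly) uses: instantiate Theorem~\ref{thm:rep-param} to fix $t$, invoke Theorem~\ref{thm:fat-param} for the pseudo-dimension, substitute range $nH$ into Theorem~\ref{thm:fat-sample}, and close with the standard ERM sandwich. You also correctly isolate the two non-routine points---the range bound $nH$ and the switch to additive error because $\emptyset\in\X$ can drive optimal revenue to zero---both of which are exactly the adjustments the paper relies on (the $nH$ observation is stated explicitly for Corollary~\ref{cor:matroid-sample}).
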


\section*{Open Questions}

There are some significant opportunities for follow-up
research.  First, there is much to do on the design of {\em
  computationally efficient} (in addition to sample-efficient)
algorithms for learning a near-optimal auction.  The present work
focuses on sample complexity, and our learning algorithms are
generally not computationally efficient.\footnote{There is a clear
  parallel with
 {\em computational} learning theory~\cite{valiant1984theory}:
  while the information-theoretic foundations of classification (VC
  dimension, etc.~\cite{VC}) have been long understood, this 
  research area strives to understand which low-dimensional concept
  classes are learnable in polynomial time.} 
The general research agenda here is
to identify auction classes $\C$ for various settings such that:

\begin{enumerate}[noitemsep,nolistsep]
\item $\C$ has low representation error;

\item $\C$ has small pseudo-dimension;

\item There is a polynomial-time algorithm to find an approximately
  revenue-maximizing auction from $\C$ on a given set of
  samples.\footnote{The sample-complexity and performance bounds
    implied by pseudo-dimension analysis, as in
    Theorem~\ref{thm:fat-sample}, hold with such an approximation
    algorithm, with the algorithm's approximation factor carrying
    through to the learning algorithm's guarantee.  See
    also~\cite{balcan2008reducing,dughmi2014sampling}.}

\end{enumerate}

There are also interesting open questions on the statistical side,
notably for multi-parameter problems.  While the negative result
in~\cite{dughmi2014sampling} rules out a universally good
upper bound on the sample complexity of learning a near-optimal
mechanism in multi-parameter settings, we suspect that positive
results are possible for several interesting special cases.

{\footnotesize{\bibliography{sources}}}

\end{document}